\renewcommand{\le}{\leqslant}
\renewcommand{\ge}{\geqslant}
\newcommand{\ol}{\overline}
\newcommand{\eps}{\varepsilon}
\newcommand{\emp}{\emptyset}
\newcommand{\Sig}{\Sigma}
\newcommand{\sig}{\sigma}
\newcommand{\noin}{\noindent}
\newcommand{\bi}{\begin{itemize}}
\newcommand{\ei}{\end{itemize}}
\newcommand{\be}{\begin{enumerate}}
\newcommand{\ee}{\end{enumerate}}
\newcommand{\bd}{\begin{description}}
\newcommand{\ed}{\end{description}}
\newcommand{\bq}{\begin{quote}}
\newcommand{\eq}{\end{quote}}
\newcommand{\cA}{{\mathcal A}}
\newcommand{\cB}{{\mathcal B}}
\newcommand{\cD}{{\mathcal D}}
\newcommand{\cN}{{\mathcal N}}
\newcommand{\cP}{{\mathcal P}}
\newcommand{\cR}{{\mathcal R}}
\newcommand{\cT}{{\mathcal T}}
\newcommand{\one}{{\mathbf 1}}
\newcommand{\lraL}{{\hspace{.1cm}{\leftrightarrow_L} \hspace{.1cm}}}
\newcommand{\defeq}{\stackrel{\rm def}{=}}
\newcommand{\rev}{R}
\newcommand{\deter}{D}
\spnewtheorem{conj}{Conjecture}{\bfseries}{\rmfamily}
\title{Most Complex Regular Right-Ideal Languages\thanks{This work was supported by the Natural Sciences and Engineering Research Council of Canada under grant No.~OGP0000871.}
}
\author{Janusz~Brzozowski and Gareth Davies}
\authorrunning{Brzozowski, Davies}   
\institute{David R. Cheriton School of Computer Science, University of Waterloo, \\
Waterloo, ON, Canada N2L 3G1\\
{\tt \{brzozo,gdavies\}@uwaterloo.ca}
}
\begin{document}

\maketitle
\begin{abstract}
A right ideal is a language $L$ over an alphabet $\Sig$ that satisfies $L=L\Sig^*$.
We show that there exists a stream (sequence) ($R_n \mid n \ge 3$) of regular right ideal languages, where $R_n$ has $n$ left quotients and is most complex under the following measures of complexity: the state complexities of the left quotients,  the number of atoms (intersections of complemented and uncomplemented left quotients), the state complexities of the atoms, the size of the syntactic semigroup, the state complexities of the operations of reversal, star, and product, and the state complexities of all binary boolean operations.
In that sense, this stream of right ideals is a universal witness.

\medskip

\noin
{\bf Keywords:}
atom,   operation,  quotient, regular language, right ideal,  state complexity,  syntactic semigroup, universal witness

\end{abstract}

\section{Introduction}
\label{sec:introduction}

Brzozowski introduced a list of conditions that a complex regular language should satisfy and found a ``universal witness'' that meets all these  conditions~\cite{Brz13}. This witness meets the upper bounds for complexity of atoms and for all the basic operations: reverse, star, boolean operations, product (concatenation, catenation), as well as a large number of combined operations. However, it does not work for subclasses of regular languages, since it generally lacks the properties of those classes.

This paper is a case study that investigates whether the approach used for general regular languages can be extended to subclasses. We present a universal witness for regular right ideals and  show that it has maximally complex atoms for right ideals and meets the bounds for all basic operations on right ideals.

For a further discussion of regular right ideals see~\cite{BJL13,BrYe11}.
It is pointed out in~\cite{BJL13} that 
right ideals deserve to be studied for several reasons: 
They are fundamental objects in semigroup theory, they appear in the theoretical computer science literature 
as early as 1965,
and continue to be of interest in the present.
Right ideal languages are complements 
of prefix-closed languages, and are closed with respect to 
the ``has a word as a prefix'' relation. 
They are special cases of convex languages, 
which form a much larger class. 
Finally, besides being of theoretical interest, right ideals also play a role in algorithms for pattern matching: 
When searching for all words beginning in a word from some set $L$, one is looking for all the words of the right ideal $L\Sig^*$.

\section{Background}

A \emph{deterministic finite automaton (DFA)} $\cD= (Q,\Sig,\delta,q_1,F)$ consists of 
a finite non-empty set $Q$ of \emph{states},
a finite non-empty \emph{alphabet} $\Sig$,  
a \emph{transition function} $\delta\colon Q\times \Sig\to Q$, an
\emph{initial state} $q_1\in Q$, and 
a set $F\subseteq Q$ of \emph{final states}.
The transition function is extended to functions $\delta'\colon Q \times \Sig^*\to Q$ and $\delta''\colon 2^Q \times \Sig^*\to 2^Q$ as usual, but these extensions are  also denoted by $\delta$. A state $q$ of a DFA is \emph{reachable} if there is a word $w\in\Sig^*$ such that $\delta(q_1,w)=q$. 
The \emph{language  accepted} by $\cD$ is $L(\cD)=\{w\in\Sig^* \mid \delta(q_1,w)\in F\}$.
Two DFAs are \emph{equivalent} if their languages are the same.
The \emph{language of a state} $q$ is the language accepted by the DFA $\cD_q= (Q,\Sig,\delta,q,F)$.
Two states are \emph{equivalent} if their languages are equal; otherwise, they are \emph{distinguishable} by some word that is in the language of one of the states, but not of the other. 
If $S \subseteq Q$, two states $p,q \in Q$ are \emph{distinguishable with respect to $S$} if there is a word $w$ such that $\delta(p,w) \in S$ and $\delta(q,w) \not\in S$.
A~DFA is \emph{minimal} if all of its states are reachable and no two states are equivalent.
A state is \emph{empty} if its language is empty.

A \emph{nondeterministic finite automaton (NFA)} is a tuple $\cN= (Q,\Sig,\eta,Q_1,F)$, where 
$Q$, $\Sig$, and $F$ are as in a DFA, $\eta\colon Q\times \Sig\to 2^Q$ is the transition function 
and $Q_1\subseteq Q$ is the \emph{set of initial states}.
An $\eps$-NFA has all the features of an NFA but its transition function 
$\eta\colon Q\times (\Sig\cup \{\eps\})\to 2^Q$ allows also transitions under the empty word. 
The \emph{language accepted} by an NFA or an $\eps$-NFA is the set of words $w$ for which there exists a sequence of transitions such that the concatenation of the symbols causing the transitions is $w$,
and this sequence leads from a state in $Q_1$ to a state in $F$.
Two NFAs are \emph{equivalent} if they accept the same language.

We use the following operations on automata: 
\be
\item
The \emph{determinization} operation $\deter$ 
applied to an NFA $\cN$ yields a DFA $\cN^{\deter}$ obtained by 
the 
subset construction, where only subsets  reachable 
from the initial subset of $\cN^\deter$ are used and the empty subset, 
if present, is included. 
\item
The \emph{reversal} operation $\rev$ applied to an NFA $\cN$ yields 
an NFA $\cN^{\rev}$, where sets of initial and final states of $\cN$ are 
interchanged and each transition 
is reversed. 
\ee

Let $\cD = (Q, \Sig, \delta, q_1, F)$ be a DFA. For each word $w \in \Sig^*$, the transition function induces a transformation $t_w$ of $Q$ by  $w$: for all $q \in Q$, 
$qt_w \defeq \delta(q, w).$ 
The set $T_{\cD}$ of all such transformations by non-empty words forms a semigroup of transformations called the \emph{transition semigroup} of $\cD$~\cite{Pin97}. 
Conversely, we can use a set  $\{t_a \mid a \in \Sig\}$ of transformations to define $\delta$, and so also the DFA $\cD$. We write $a\colon t$, where $t$ is a transformation of $Q$, to mean that the transformation induced by $a \in \Sig$ is~$t$.

The \emph{Myhill congruence}~\cite{Myh57} $\lraL$ of a language $L\subseteq \Sig^*$ is defined on $\Sig^+$ as follows:
\begin{equation*}
\mbox{For $x, y \in \Sig^+$, } x \lraL y \mbox{ if and only if } uxv\in L  \Leftrightarrow uyv\in L\mbox { for all } u,v\in\Sig^*.
\end{equation*}
This congruence is also known as the \emph{syntactic congruence} of $L$.
The quotient set $\Sig^+/ \lraL$ of equivalence classes of the relation $\lraL$ is a semigroup called the \emph{syntactic semigroup} of $L$.
If  $\cD$ is the minimal DFA of $L$, then $T_{\cD}$ is isomorphic to the syntactic semigroup $T_L$ of $L$~\cite{Pin97}, and we represent elements of $T_L$ by transformations in~$T_{\cD}$. 
An arbitrary transformation can be written in the form
\begin{equation*}\label{eq:transmatrix}
t=\left( \begin{array}{ccccc}
q_1 & q_2 &   \cdots &  q_{n-1} & q_n \\
p_1 & p_2 &   \cdots &  p_{n-1} & p_n
\end{array} \right ),
\end{equation*}
where $p_k = kt$,  $1\le k\le n$, and $p_k\in Q$. 

A \emph{permutation} of $Q$ is a mapping of $Q$ \emph{onto} itself. 
The \emph{identity} transformation $\one$ maps  each element to itself, that is, $q\one=q$ for $q\in Q$.
A transformation $t$ is a \emph{cycle} of length $k$ if there exist pairwise different elements $p_1,\ldots,p_k$ such that
$p_1t=p_2,p_2t=p_3,\ldots, p_{k-1}t=p_k$,  $p_kt=p_1$, and other elements of $Q$ are mapped to themselves.
A cycle is denoted by $(p_1,p_2,\ldots,p_k)$.
A \emph{transposition} is a cycle $(p,q)$.
A~\emph{unitary} transformation, denoted by $(p\rightarrow q)$, has $pt=q$ and $rt=r$ for all $r\neq p$.

The set of all permutations of a set $Q$ of $n$ elements is a group, 
called the \emph{symmetric group} of degree $n$. 
It is well known that  two generators are sufficient to generate the symmetric group of degree $n$. 
Without loss of generality, from now on  we assume that $Q=\{1,2,\dots, n\}$.
\begin{proposition}[Permutations]
\label{prop:piccard}
The symmetric group of size $n!$ can be generated by any cyclic
permutation of $n$ elements together with any transposition. In particular, it can be generated by
$(1,2,\ldots, n)$ and  $(1,2)$.
\end{proposition}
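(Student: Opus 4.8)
The plan is to prove the concrete assertion $\gen{\sigma,\tau}=S_n$ for $\sigma=(1,2,\ldots,n)$ and $\tau=(1,2)$ first, since this is the form the paper will actually use, and then to recover the general statement by relabeling. The backbone is the classical fact that every permutation of $Q=\{1,\ldots,n\}$ is a product of transpositions, so it suffices to show that the subgroup $\gen{\sigma,\tau}$ contains every transposition; the reverse inclusion $\gen{\sigma,\tau}\subseteq S_n$ is trivial since $\sigma,\tau$ are themselves permutations.

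The argument splits into two short group-theoretic steps. First, using the paper's right-action convention, I would record the conjugation identity $\rho^{-1}(a,b)\rho=(a\rho,\,b\rho)$, valid for any permutation $\rho$ and transposition $(a,b)$. Taking $\rho=\sigma^{k}$ gives $\sigma^{-k}\tau\sigma^{k}=(1+k,\,2+k)$, with the indices read cyclically on $\{1,\ldots,n\}$; as $k$ ranges over $0,\ldots,n-2$ this produces all the adjacent transpositions $(1,2),(2,3),\ldots,(n-1,n)$, so $\gen{\sigma,\tau}$ contains all of them. Second, I would show that the adjacent transpositions generate every transposition, by induction on $j-i$ via $(i,j)=(j-1,j)\,(i,j-1)\,(j-1,j)$: the right-hand side is $(i,j-1)$ conjugated by an adjacent transposition, and $(i,j-1)$ is a product of adjacent transpositions by the induction hypothesis, the base case $(i,i+1)$ being adjacent. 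Chaining the three facts ``transpositions generate $S_n$'', ``adjacent transpositions generate all transpositions'', and ``$\gen{\sigma,\tau}$ contains all adjacent transpositions'' yields $\gen{\sigma,\tau}=S_n$.

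For the general statement, given an $n$-cycle $c$ and a transposition $t$ of two elements that are cyclically consecutive in $c$, I would choose a permutation $\pi$ with $\pi^{-1}\sigma\pi=c$ and $\pi^{-1}\tau\pi=t$ (concretely, after rotating the cycle notation so $t$ transposes the first two symbols of $c$, let $\pi$ send $j\mapsto c_j$). Since conjugation by $\pi$ is an automorphism of $S_n$, we get $\gen{c,t}=\pi^{-1}\gen{\sigma,\tau}\pi=\pi^{-1}S_n\pi=S_n$, so the special case transfers verbatim.

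The one point needing genuine care, and the main obstacle, is the phrase \emph{any transposition}: taken literally it is too strong. An arbitrary transposition $(a,b)$ together with $(1,2,\ldots,n)$ generates $S_n$ only when $\gcd(b-a,n)=1$; otherwise one is trapped in a proper imprimitive subgroup, for instance $(1,2,3,4)$ and $(1,3)$ generate only the dihedral group of order $8$. I would therefore interpret (or lightly restate) the hypothesis as ``a transposition of two cyclically consecutive elements of the cycle'' before invoking the relabeling argument above; this is exactly the case satisfied by the witness generators $(1,2,\ldots,n)$ and $(1,2)$, so nothing the paper needs is lost. The remaining work is routine symmetric-group bookkeeping.
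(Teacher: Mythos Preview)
The paper offers no proof of this proposition: it is stated as a background fact (the label \texttt{prop:piccard} and the bibliography entry~\cite{Pic39} point to Piccard), so there is nothing on the paper's side to compare your argument against. Your proof of the particular instance $\gen{(1,2,\ldots,n),(1,2)}=S_n$ is correct and is the standard one: conjugate $(1,2)$ by powers of the $n$-cycle to obtain all adjacent transpositions, then build all transpositions from those.

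More importantly, you have correctly identified that the general assertion as phrased in the paper is \emph{false}. An $n$-cycle together with an arbitrary transposition need not generate $S_n$; your counterexample $(1,2,3,4)$ and $(1,3)$, which generate only a dihedral group of order $8$, is the canonical one. The right hypothesis is that the two moved points of the transposition lie at positions along the cycle whose distance is coprime to $n$ (equivalently, after relabeling so the cycle is $(1,\ldots,n)$, the transposition is $(i,j)$ with $\gcd(j-i,n)=1$). Your proposed fix---restrict to transpositions of cyclically consecutive elements---is a sufficient special case and is exactly what the paper actually uses (both here and in Remark~\ref{rem:transposition} and Lemma~\ref{lem:connect}), so nothing downstream is harmed. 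It would be worth flagging this to the authors as an erratum in the statement, but your handling of it is sound.
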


The set of all transformations of a set $Q$,
denoted by $\cT_Q$, 
is a semigroup, in fact a monoid with $\one$ as the identity. 
It is well known that three transformations of $Q$ are sufficient to generate the semigroup $\cT_Q$,  and fewer than three generators are insufficient.

\begin{proposition}[Transformations]
\label{prop:piccard2}
The transformation monoid $\cT_Q$ of size $n^n$ can be generated by any cyclic
permutation of $n$ elements together with any transposition and any unitary transformation. In particular, $\cT_Q$ can be generated by $c\colon (1,2,\ldots, n)$,  $t\colon (1,2)$ and ${r\colon (n \rightarrow 1)}$.
\end{proposition}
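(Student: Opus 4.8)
The plan is to build up $\cT_Q$ in three stages, starting from the permutations and then descending through the ranks. Call the \emph{rank} of a transformation $t$ the size of its image $\{qt \mid q \in Q\}$. First, since a cyclic permutation together with a transposition generates the whole symmetric group by Proposition~\ref{prop:piccard}, any two such generators (in particular $c$ and $t$) already yield every permutation of $Q$, that is, every transformation of rank $n$. The remaining task is to show that adjoining a single transformation of rank $n-1$ suffices. Note that any unitary transformation $(p \rightarrow q)$ with $p \neq q$ does have rank $n-1$: it merges exactly the pair $\{p,q\}$ and fixes the other $n-2$ points, so its image is $Q \setminus \{p\}$. The goal is then to reach all transformations of rank $n-1$, and from them, by composition, all transformations of smaller rank.

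The second stage is to generate every rank-$(n-1)$ transformation from the symmetric group together with one such transformation $s_0$. The key observation is that the two-sided action $s \mapsto p\, s\, p'$ of $S_n \times S_n$ is transitive on the set of rank-$(n-1)$ transformations, because they all have the same shape: a single kernel class of size two, the rest singletons, and an image missing exactly one point. Concretely, given a target rank-$(n-1)$ map $s$ that merges the pair $\{a,b\}$ and omits the point $d$ from its image, I would choose the left factor $p$ so that it carries $\{a,b\}$ onto the merged pair of $s_0$, and choose the right factor $p'$ so that it sends the image of $s_0$ onto the image of $s$ exactly as prescribed by $s$ (sending the omitted point to $d$); a short verification on each of $a$, $b$, and the remaining points shows $s = p\,s_0\,p'$. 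Since $S_n$ is already available, this produces all rank-$(n-1)$ transformations, and in particular the argument applies when $s_0$ is the given unitary transformation.

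The third stage is to show that the permutations together with all rank-$(n-1)$ transformations generate $\cT_Q$. I would argue by induction on $n-k$, proving that every transformation of rank $k$ with $1\le k \le n-1$ factors as a product of one transformation of rank $k+1$ and one unitary transformation. Given $t$ of rank $k \le n-2$, pick a kernel class $B$ of size at least two (one exists because $k<n$) and an element $b\in B$; define $t'$ to agree with $t$ everywhere except that $b\,t' = d$ for a fresh point $d$ not in the image of $t$, so that $t'$ has rank $k+1$, and let $e = (d \rightarrow bt)$ be the unitary that fixes the image of $t$ and sends $d$ to $bt$. Since $d$ is not in the image of $t$, one checks directly that $t = t'\,e$; as $e$ has rank $n-1$ and $t'$ has rank $k+1$, induction (with the base case rank $n$ being a permutation) expresses $t$ as a product of permutations and rank-$(n-1)$ transformations. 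Combining the three stages gives $\cT_Q = \gen{c,t,r}$, and more generally $\cT_Q$ is generated by any cyclic permutation, any transposition, and any unitary transformation.

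I expect the main obstacle to be the second stage: getting the bookkeeping of the two permutations $p$ and $p'$ exactly right, so that $p\,s_0\,p'$ reproduces not merely the kernel and image of the target $s$, but the precise assignment of kernel classes to image points. The factorisation in the third stage is then routine once one notices that splitting off a single merged point is undone by exactly one unitary map.
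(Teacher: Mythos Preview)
The paper does not prove this proposition: it is stated as a well-known fact, preceded only by the remark that three generators suffice for $\cT_Q$ and that fewer do not. There is therefore no proof in the paper to compare against.

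Your argument is correct and follows the standard route. Stage~1 is immediate from Proposition~\ref{prop:piccard}. Stage~2 is the heart of the matter, and your claim that $S_n \times S_n$ acts transitively (by pre- and post-composition) on rank-$(n-1)$ transformations is right: any such map is determined by its merged pair, its omitted image point, and the induced bijection between kernel classes and image, and all three pieces can be adjusted independently by left and right permutations. Your concern about the bookkeeping is well placed but resolvable exactly as you outline: once $p$ is chosen to align the merged pairs, the map $(xp)s_0 \mapsto xs$ is a well-defined bijection from $\mathrm{im}(s_0)$ to $\mathrm{im}(s)$, and extending it by $d_0 \mapsto d$ gives the required $p'$. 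Stage~3 is a clean downward induction on rank, and the factorisation $t = t'e$ with $e = (d \to bt)$ works verbatim because $d \notin \mathrm{im}(t)$ ensures $e$ acts trivially on every point of the form $xt$ with $x \neq b$.

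One small caveat worth making explicit in a final write-up: the phrase ``any unitary transformation'' must be read as $(p \to q)$ with $p \neq q$, since otherwise the transformation is the identity and three permutations cannot generate a non-permutation. You already flag this, but it should be stated as a standing hypothesis rather than a parenthetical.
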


The \emph{state complexity of a regular language}~\cite{Yu01} $L$ over a finite alphabet $\Sig$ is the number of states in the minimal DFA recognizing $L$. An equivalent notion is that of \emph{quotient complexity}~\cite{Brz10}, which is  the number of distinct left quotients of $L$, where
 the left quotient of $L\subseteq \Sig^*$ by a word $w\in \Sig^*$ is the language 
$w^{-1}L=\{x\in\Sig^*\mid wx\in L\}$. This paper uses  \emph{complexity} for both of these equivalent notions, and this  term will not be used for any other property here. 

The \emph{(state/quotient) complexity of an operation} on regular languages is the maximal  complexity of the language resulting from the operation as a function of the  complexities of the arguments.
For example, for $L \subseteq \Sig^*$, the complexity of the reverse $L^R$ of $L$ is $2^n$ if the complexity of $L$ is $n$, since a minimal DFA for $L^R$ can have at most $2^n$ states and there exist languages meeting this bound~\cite{Mir66}.

There are two parts to the process of  establishing the  complexity of an operation.
First, one must find an \emph{upper bound} on the  complexity of the result of the operation
by using quotient computations or automaton constructions.
Second, one must find \emph{witnesses}  that meet this upper bound.
One usually defines a sequence $(L_n\mid n\ge k)$ of languages, where  $k$ is some small positive integer.  This sequence will be called a \emph{stream}. The languages  in a stream differ only in the parameter $n$. 
For example, 
one might study unary languages $(\{a^n\}^*\mid n\ge 1)$ that have zero $a$'s modulo $n$. 
A unary operation takes its argument from a stream $(L_n\mid n\ge k)$.
For a binary operation, one adds  a stream $(K_n\mid n\ge k)$ as the second argument. While the witness streams are normally different for different operations, the main result of this paper shows that a single stream can meet the complexity bounds for all operations in the case of right ideals.

Atoms of regular languages were introduced in 2011~\cite{BrTa11}, and their complexities were studied  in 2012~\cite{BrTa12}.
Let $L$ be a regular language with quotients $K = \{K_1,\dotsc,K_n\}$. Each subset $S$ of $K$ defines an \emph{atomic intersection} $A = \widetilde{K_1} \cap \dotsb \cap \widetilde{K_n}$, where $\widetilde{K_i}$ is $K_i$ if $K_i \in S$ and $\ol{K_i}$ otherwise. An \emph{atom} of $L$ is a non-empty atomic intersection. Since non-empty atomic intersections are pairwise disjoint, every atom $A$ has a unique atomic intersection associated with it, and this atomic intersection has a unique subset $S$ of $K$ associated with it. This set $S$ is called the \emph{basis} of $A$ and is denoted by $\cB(A)$. The \emph{co-basis} of $A$ is $\ol{\cB(A)} = K \setminus \cB(A)$. The basis of an atom is the set of quotients of $L$ that occur uncomplemented as terms of the corresponding intersection, and the co-basis is the set of quotients that occur complemented.

It was proven in~\cite{BrTa11} that each regular language $L$ defines a unique set of atoms, and that every quotient of $L$ (including $L$ itself) and every quotient of every atom of $L$ is a union of atoms. Thus the atoms of $L$ are its basic building blocks. In~\cite{Brz13} it was argued that it is useful to consider the complexity of a language's atoms when searching for ``most complex'' regular languages, since one would expect a complex language to have  complex building blocks.

Let ${\bf A} = \{A_1,\dotsc,A_m\}$ be the set of atoms of $L$. The \emph{\'atomaton} of $L$ is the NFA $\cA = ({\bf A},\Sig,\eta,{\bf A}_I,A_f)$, where the \emph{initial} atoms are ${\bf A}_I = \{ A_i \mid L \in \cB(A_i) \}$, the \emph{final} atom $A_f$ is the unique atom such that $K_i \in \cB(A_f)$ if and only if $\eps \in K_i$, and $A_j \in \eta(A_i,a)$ if and only if $aA_j \subseteq A_i$. The \'atomaton has the property that each state is its own language, that is, the language of the state $A$ of $\cA$ is the atom $A$ of $L$. Also, since each regular language defines a unique set of atoms, each regular language also defines a unique \'atomaton.

It was shown in~\cite{BrTa11,BrTa12}  that if $\cD$ is the minimal DFA for $L$, then $\cA^\rev$ is a minimal DFA 
that accepts $L^R$, and $\cA^\rev$ is isomorphic to $\cD^{\rev\deter}$. From this it follows that $\cA$ is isomorphic to the NFA $\cD^{\rev\deter\rev}$. In particular, we have the following isomorphism:
\begin{proposition}[\'Atomaton Isomorphism]
\label{prop:atomiso}
Let $L$ be a regular language with quotients $K = \{K_1,\dotsc,K_n\}$ and set of atoms ${\bf A}$. Let $\cD$ be the minimal DFA of $L$, with state set $Q = \{1,\dotsc,n\}$ such that the language of state $i$ is $K_i$. Then the map $\varphi \colon  {\bf A} \rightarrow 2^Q$ defined by $\varphi(A) = \{ i \mid K_i \in \cB(A) \}$ is an isomorphism between $\cA$ and $\cD^{\rev\deter\rev}$.
\end{proposition}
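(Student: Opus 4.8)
The plan is to take the isomorphism $\cA\cong\cD^{\rev\deter\rev}$, which already follows from the results of~\cite{BrTa11,BrTa12} recalled above, and to verify that it is \emph{realized} by the explicit map $\varphi$. An isomorphism of NFAs must do four things, and I would check each in turn: that $\varphi$ is a bijection from $\mathbf{A}$ onto the state set of $\cD^{\rev\deter\rev}$, that it sends initial states to initial states, that it sends the final atom to the final state, and that it is compatible with the two transition relations. The transition step is the only one requiring real computation; the rest is bookkeeping.

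First I would unwind the construction of $\cD^{\rev\deter\rev}$. Reversal keeps the same underlying states and merely reverses edges while interchanging initial and final states, so $\cD^{\rev\deter}$ and $\cD^{\rev\deter\rev}$ share the same state set, namely the reachable subsets of $\cD^{\rev\deter}$. By the atom-to-subset correspondence of~\cite{BrTa11} (which is exactly the statement that $\cA^{\rev}$ is isomorphic to $\cD^{\rev\deter}$ via the basis map), this state set equals $\{\varphi(A)\mid A\in\mathbf{A}\}$. Injectivity of $\varphi$ is immediate, since distinct atoms are distinct atomic intersections and hence have distinct bases; surjectivity onto the reachable subsets is the cited correspondence. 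I would also record that in $\cD^{\rev\deter}$ the transition on $a$ is the preimage $\delta^{\rev\deter}(S,a)=\{k\in Q\mid kt_a\in S\}$, where $t_a$ is the transformation induced by $a$.

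The core step is transition compatibility. Here I would use two facts. Because the language of state $i$ is $K_i$, we have $a^{-1}K_i=K_{it_a}$, whence
\[ a^{-1}A_i \;=\; \Big(\bigcap_{k\in\varphi(A_i)}K_{kt_a}\Big)\cap\Big(\bigcap_{k\notin\varphi(A_i)}\ol{K_{kt_a}}\Big). \]
And because the atom $A_j$ is the intersection in which $K_m$ appears uncomplemented exactly when $m\in\varphi(A_j)$, we have $A_j\subseteq K_m\Leftrightarrow m\in\varphi(A_j)$ and $A_j\subseteq\ol{K_m}\Leftrightarrow m\notin\varphi(A_j)$. Combining these, $A_j\subseteq a^{-1}A_i$ holds iff $k\in\varphi(A_i)\Leftrightarrow kt_a\in\varphi(A_j)$ for every $k$, that is, iff $\varphi(A_i)=\{k\mid kt_a\in\varphi(A_j)\}=\delta^{\rev\deter}(\varphi(A_j),a)$. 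Since the atoms are pairwise disjoint and $a^{-1}A_i$ is a union of atoms, the \'atomaton condition $aA_j\subseteq A_i$ (i.e.\ $A_j\subseteq a^{-1}A_i$) is therefore equivalent to $\delta^{\rev\deter}(\varphi(A_j),a)=\varphi(A_i)$, which after reversal is precisely $\varphi(A_j)\in\eta^{\rev\deter\rev}(\varphi(A_i),a)$. Thus $\varphi$ carries the edge $A_i\xrightarrow{a}A_j$ of $\cA$ to the edge $\varphi(A_i)\xrightarrow{a}\varphi(A_j)$ of $\cD^{\rev\deter\rev}$ and back, as required.

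Finally, the boundary designations are immediate. An atom $A$ is initial in $\cA$ iff $L=K_1\in\cB(A)$, i.e.\ iff $1\in\varphi(A)$; and the initial states of $\cD^{\rev\deter\rev}$ are the final states of $\cD^{\rev\deter}$, namely the subsets containing the start state $1$ of $\cD$, so these match. For the final atom, $\varphi(A_f)=\{i\mid\eps\in K_i\}=F$, and the unique final state of $\cD^{\rev\deter\rev}$ is the initial subset $F$ of $\cD^{\rev\deter}$, so these match as well. I expect the main obstacle to be the transition step: both the quotient identity for $a^{-1}A_i$ and the passage from ``$A_j\subseteq a^{-1}A_i$'' to the single preimage equation $\varphi(A_i)=\{k\mid kt_a\in\varphi(A_j)\}$ depend on the fact that each quotient of an atom is a union of atoms, so disjointness of atoms must be invoked with care; once that is in hand, everything reduces to the bookkeeping above.
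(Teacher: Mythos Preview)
Your argument is correct. The paper does not supply a proof of this proposition at all: it simply remarks, in the sentence preceding the statement, that $\cA^{\rev}$ is isomorphic to $\cD^{\rev\deter}$ by results of~\cite{BrTa11,BrTa12}, whence $\cA\cong\cD^{\rev\deter\rev}$, and then records the explicit form of the map as Proposition~\ref{prop:atomiso} without further justification. Your write-up therefore goes well beyond what the paper does, carrying out the direct verification---bijectivity, initial/final compatibility, and the transition computation via $a^{-1}K_k=K_{kt_a}$ and the preimage description of $\delta^{\rev\deter}$---that the paper leaves to the cited references. The substance of your transition argument (that $A_j\subseteq a^{-1}A_i$ iff $\varphi(A_i)=t_a^{-1}(\varphi(A_j))$) is exactly the content underlying the isomorphism $\cA^{\rev}\cong\cD^{\rev\deter}$ proved in~\cite{BrTa11}, so your approach and the paper's are the same in spirit; you have simply made explicit what the paper defers to the literature.
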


\section{Main Results}
The right ideal stream $(R_n \mid n \ge 3)$ that turns out to be most complex is   defined as follows:

\begin{definition}
\label{def:mostcomplex}
For $n\ge 3$, let $\cR_n=\cR_n(a,b,c,d)=(Q,\Sig,\delta, 1, F)$, where 
$Q=\{1,\dots,n\}$
is the set of states\footnote{Although $Q$, $\delta$, and $F$ depend on $n$,  this dependence is not shown to keep the notation as simple as possible.}, 
$\Sig=\{a,b,c,d\}$ is the alphabet, 
the transformations defined by $\delta$ are
$a\colon (1,\dots,n-1)$,
$b\colon(2,\ldots,n-1)$,
${c\colon(n-1\rightarrow 1)}$
and ${d\colon(n-1\rightarrow n)}$, 
$1$ is the initial state, and
$F=\{n\}$ is the set of final states.
Let $R_n=R_n(a,b,c,d)$ be the language accepted by~$\cR_n$.
\end{definition}

\begin{figure}[t]
\begin{center}
\setlength{\unitlength}{0.00043745in}
\begingroup\makeatletter\ifx\SetFigFont\undefined%
\gdef\SetFigFont#1#2#3#4#5{%
  \reset@font\fontsize{#1}{#2pt}%
  \fontfamily{#3}\fontseries{#4}\fontshape{#5}%
  \selectfont}%
\fi\endgroup%
{\renewcommand{\dashlinestretch}{30}
\begin{picture}(7905,2095)(0,-10)
\put(7422,1107){\makebox(0,0)[lb]{\smash{{\SetFigFont{6}{7.2}{\familydefault}{\mddefault}{\updefault}$n$}}}}
\put(2927.000,1592.333){\arc{333.333}{2.2143}{7.2105}}
\blacken\path(3064.638,1555.417)(3027.000,1459.000)(3105.107,1526.913)(3064.638,1555.417)
\put(1847.000,1592.333){\arc{333.333}{2.2143}{7.2105}}
\blacken\path(1984.638,1555.417)(1947.000,1459.000)(2025.107,1526.913)(1984.638,1555.417)
\put(5367.000,1562.333){\arc{333.333}{2.2143}{7.2105}}
\blacken\path(5504.638,1525.417)(5467.000,1429.000)(5545.107,1496.913)(5504.638,1525.417)
\put(7542.000,1652.333){\arc{333.333}{2.2143}{7.2105}}
\blacken\thicklines
\path(7673.377,1655.553)(7642.000,1519.000)(7739.310,1619.806)(7673.377,1655.553)
\thinlines
\put(771,1144){\ellipse{630}{630}}
\put(1865,1142){\ellipse{630}{630}}
\put(6455,1144){\ellipse{630}{630}}
\put(5367,1144){\ellipse{630}{630}}
\put(7537,1153){\ellipse{720}{720}}
\put(2935,1144){\ellipse{630}{630}}
\put(7534,1152){\ellipse{630}{630}}
\path(2172,1144)(2622,1144)
\blacken\thicklines
\path(2487.000,1106.500)(2622.000,1144.000)(2487.000,1181.500)(2487.000,1106.500)
\thinlines
\path(3252,1144)(3702,1144)
\blacken\thicklines
\path(3567.000,1106.500)(3702.000,1144.000)(3567.000,1181.500)(3567.000,1106.500)
\thinlines
\path(1092,1144)(1542,1144)
\blacken\thicklines
\path(1407.000,1106.500)(1542.000,1144.000)(1407.000,1181.500)(1407.000,1106.500)
\thinlines
\path(6769,1144)(7167,1144)
\blacken\thicklines
\path(7032.000,1106.500)(7167.000,1144.000)(7032.000,1181.500)(7032.000,1106.500)
\thinlines
\path(4602,1152)(5052,1152)
\blacken\thicklines
\path(4917.000,1114.500)(5052.000,1152.000)(4917.000,1189.500)(4917.000,1114.500)
\thinlines
\path(5675,1144)(6125,1144)
\blacken\thicklines
\path(5990.000,1106.500)(6125.000,1144.000)(5990.000,1181.500)(5990.000,1106.500)
\thinlines
\path(12,1144)(462,1144)
\blacken\thicklines
\path(327.000,1106.500)(462.000,1144.000)(327.000,1181.500)(327.000,1106.500)
\thinlines
\path(6192,949)(6191,949)(6190,948)
	(6187,947)(6182,946)(6175,943)
	(6166,940)(6154,937)(6140,932)
	(6122,926)(6101,919)(6077,912)
	(6051,903)(6021,894)(5988,883)
	(5952,872)(5914,860)(5873,848)
	(5830,835)(5785,822)(5737,808)
	(5688,794)(5638,779)(5586,765)
	(5532,750)(5478,736)(5422,722)
	(5365,707)(5307,693)(5248,679)
	(5188,666)(5127,653)(5064,640)
	(5000,627)(4935,615)(4868,603)
	(4800,592)(4730,581)(4657,571)
	(4583,561)(4507,552)(4429,544)
	(4349,536)(4267,530)(4183,524)
	(4098,519)(4013,516)(3927,514)
	(3830,513)(3735,514)(3643,517)
	(3554,521)(3469,526)(3387,533)
	(3310,540)(3235,549)(3165,558)
	(3097,568)(3033,579)(2971,591)
	(2912,603)(2855,615)(2800,629)
	(2747,642)(2696,656)(2647,671)
	(2599,685)(2553,700)(2508,715)
	(2466,730)(2425,744)(2386,759)
	(2349,773)(2314,786)(2281,799)
	(2251,812)(2223,823)(2198,834)
	(2176,843)(2156,852)(2139,859)
	(2125,865)(2114,871)(2105,875)
	(2099,878)(2089,882)
\blacken\thicklines
\path(2228.272,866.680)(2089.000,882.000)(2200.417,797.044)(2228.272,866.680)
\thinlines
\path(6319,852)(6318,852)(6317,851)
	(6314,849)(6310,847)(6304,844)
	(6296,839)(6285,834)(6272,826)
	(6255,818)(6236,808)(6214,796)
	(6189,783)(6160,768)(6129,752)
	(6095,734)(6057,716)(6017,695)
	(5975,674)(5929,652)(5882,629)
	(5832,605)(5780,580)(5726,555)
	(5671,530)(5614,504)(5555,478)
	(5495,452)(5434,427)(5372,401)
	(5308,375)(5244,350)(5178,326)
	(5111,301)(5043,278)(4974,254)
	(4903,232)(4831,210)(4757,189)
	(4682,169)(4605,149)(4527,131)
	(4446,113)(4364,96)(4279,81)
	(4192,67)(4103,54)(4012,43)
	(3919,33)(3824,25)(3727,18)
	(3629,14)(3531,12)(3432,12)
	(3326,15)(3222,20)(3119,28)
	(3020,38)(2924,50)(2830,63)
	(2741,79)(2654,96)(2571,114)
	(2491,134)(2413,155)(2339,177)
	(2268,200)(2198,224)(2131,249)
	(2067,274)(2004,300)(1943,327)
	(1884,355)(1826,383)(1770,412)
	(1716,441)(1663,470)(1611,499)
	(1561,528)(1513,558)(1466,587)
	(1421,616)(1378,644)(1336,671)
	(1297,698)(1260,724)(1225,748)
	(1192,771)(1162,793)(1134,813)
	(1109,831)(1087,848)(1068,862)
	(1051,875)(1036,886)(1024,895)
	(1015,903)(1007,908)(1002,913)(994,919)
\blacken\thicklines
\path(1124.500,868.000)(994.000,919.000)(1079.500,808.000)(1124.500,868.000)
\put(716,1077){\makebox(0,0)[lb]{\smash{{\SetFigFont{7}{8.4}{\rmdefault}{\mddefault}{\updefault}$1$}}}}
\put(1804,1077){\makebox(0,0)[lb]{\smash{{\SetFigFont{7}{8.4}{\rmdefault}{\mddefault}{\updefault}$2$}}}}
\put(2891,1077){\makebox(0,0)[lb]{\smash{{\SetFigFont{7}{8.4}{\rmdefault}{\mddefault}{\updefault}$3$}}}}
\put(6190,1081){\makebox(0,0)[lb]{\smash{{\SetFigFont{6}{7.2}{\familydefault}{\mddefault}{\updefault}$n-1$}}}}
\put(1137,1279){\makebox(0,0)[lb]{\smash{{\SetFigFont{7}{8.4}{\familydefault}{\mddefault}{\updefault}$a$}}}}
\put(5097,1099){\makebox(0,0)[lb]{\smash{{\SetFigFont{6}{7.2}{\familydefault}{\mddefault}{\updefault}$n-2$}}}}
\put(1684,1857){\makebox(0,0)[lb]{\smash{{\SetFigFont{7}{8.4}{\familydefault}{\mddefault}{\updefault}$c,d$}}}}
\put(2697,1857){\makebox(0,0)[lb]{\smash{{\SetFigFont{7}{8.4}{\familydefault}{\mddefault}{\updefault}$c,d$}}}}
\put(5089,1849){\makebox(0,0)[lb]{\smash{{\SetFigFont{7}{8.4}{\familydefault}{\mddefault}{\updefault}$c,d$}}}}
\put(7129,1909){\makebox(0,0)[lb]{\smash{{\SetFigFont{7}{8.4}{\familydefault}{\mddefault}{\updefault}$a,b,c,d$}}}}
\put(4055,1085){\makebox(0,0)[lb]{\smash{{\SetFigFont{7}{8.4}{\familydefault}{\mddefault}{\updefault}$\cdots$}}}}
\put(6912,1264){\makebox(0,0)[lb]{\smash{{\SetFigFont{7}{8.4}{\familydefault}{\mddefault}{\updefault}$d$}}}}
\put(2181,1264){\makebox(0,0)[lb]{\smash{{\SetFigFont{7}{8.4}{\familydefault}{\mddefault}{\updefault}$a,b$}}}}
\put(3254,1264){\makebox(0,0)[lb]{\smash{{\SetFigFont{7}{8.4}{\familydefault}{\mddefault}{\updefault}$a,b$}}}}
\put(4559,1279){\makebox(0,0)[lb]{\smash{{\SetFigFont{7}{8.4}{\familydefault}{\mddefault}{\updefault}$a,b$}}}}
\put(5684,1271){\makebox(0,0)[lb]{\smash{{\SetFigFont{7}{8.4}{\familydefault}{\mddefault}{\updefault}$a,b$}}}}
\put(3386,101){\makebox(0,0)[lb]{\smash{{\SetFigFont{7}{8.4}{\familydefault}{\mddefault}{\updefault}$a,c$}}}}
\put(3791,589){\makebox(0,0)[lb]{\smash{{\SetFigFont{7}{8.4}{\familydefault}{\mddefault}{\updefault}$b$}}}}
\put(462,1864){\makebox(0,0)[lb]{\smash{{\SetFigFont{7}{8.4}{\familydefault}{\mddefault}{\updefault}$b,c,d$}}}}
\thinlines
\put(772.000,1577.333){\arc{333.333}{2.2143}{7.2105}}
\blacken\path(909.638,1540.417)(872.000,1444.000)(950.107,1511.913)(909.638,1540.417)
\end{picture}
}
\end{center}
\caption{Automaton $\cR_n$ of a most complex right ideal $R_n$.} 
\label{fig:RightIdeal}
\end{figure}

The structure of the DFA $\cR_n(a,b,c,d)$ is shown in Figure~\ref{fig:RightIdeal}. 
Note that  input $b$ induces the identity transformation in $\cR_n$ for $n=3$.

It is worth noting this stream of languages is very similar to the stream $( L_n \mid n \ge 2)$  defined in~\cite{BrTa12} and shown to be a ``universal witness'' in~\cite{Brz13}. 
In that stream, $L_n$ is defined by the DFA $\cD_n=\cD_n(a,b,c)=(Q,\Sig,\delta, 1, \{n\})$, where 
$Q=\{1,\dots,n\}$,
$\Sig=\{a,b,c\}$,  and
$\delta$ is defined by
$a\colon (1,\dots,n-1)$,
$b\colon(1,2)$, and
${c\colon(n-1\rightarrow 1)}$.
The automaton $\cR_n$ can be constructed by taking $\cD_{n-1}$, adding a new state $n$ and a new input $d\colon  (n-1 \rightarrow n)$, making $n$ the only final state, and having $b$ induce the cyclic permutation $(2,\dotsc,n-1)$ (rather than  $(1,2)$). The new state and input are necessary to ensure $R_n$ is a right ideal for all $n$. Changing the transformation induced by $b$ is necessary since, if $b$ were to induce $(1,2)$ in $\cR_n$, then $R_n$ would not meet the bound for product.

We can generalize this definition to a stream $(R_n \mid n \ge 1)$ by noting that
when $n=1$, all four inputs induce the identity transformation, and when $n = 2$, $a$, $b$ and $c$ induce the identity transformation while $d$ induces $(1 \rightarrow 2)$. Hence $R_1 = \{a,b,c,d\}^*$ and $R_2 = \{a,b,c\}^*d\{a,b,c,d\}^*$. However, the complexity bound for star is not reached by $R_1$ and the complexity bounds for boolean operations are not reached when one of the operands is $R_1$ or $R_2$. Thus we require $n \ge 3$ for $R_n$ to be a true universal witness.

In some cases, the complexity bounds can be reached even when the alphabet size is reduced. If the letter $c$ is not needed, then we let $\cR_n(a,b,d)$ be the DFA of Definition~\ref{def:mostcomplex} restricted to inputs $a$, $b$ and $d$, and let $R_n(a,b,d)$ be the language recognized by this DFA.
If both $b$ and $c$ are not needed, we use $\cR_n(a,d)$  and $R_n(a,d)$.
We also define $\cR_n(b,a,d)$ to be the DFA obtained from $\cR_n(a,b,d)$ by interchanging the roles of the inputs $a$ and $b$, and let $R_n(b,a,d)$ be the corresponding language.

\begin{theorem}[Main Results]
The language $R_n=R_n(a,b,c,d)$ has the properties listed below. Moreover, all the complexities of $R_n$ are the maximal possible for right ideals. The results hold for all $n \ge 1$ unless otherwise specified.
\bi
\item
$R_n(a,d)$ has $n$ quotients, that is, its (state/quotient) complexity is $n$.
\item
The syntactic semigroup of $R_n(a,b,c,d)$ has cardinality $n^{n-1}$.
\item
Quotients of $R_n(a,d)$ have complexity $n$, except for the quotient $\{a,d\}^*$, which has complexity 1.
\item
$R_n(a,b,c,d)$ has $2^{n-1}$ atoms.
\item
The atom of $R_n(a,b,c,d)$ with an empty co-basis has complexity $2^{n-1}$.
\item
If an atom of $R_n(a,b,c,d)$ has a co-basis of size $r$ with $1 \le r \le n-1$, its  complexity is 
\[ 1 + \sum_{k=1}^{r} \sum_{h=k+1}^{k+n-r} \binom{n-1}{h-1}\binom{h-1}{k}. \]
\item
The reverse of $R_n(a,d)$ has complexity $2^{n-1}$.
\item
For $n\ge2$, the star of $R_n(a,d)$ has complexity $n+1$.
\item
For $m,n \ge 3$, the complexity of $R_m(a,b,d) \cup R_n(b,a,d)$ is $mn-(m+n-2)$.
\item
For $m,n\ge 3$, the complexity of $R_m(a,b,d) \cap R_n(b,a,d)$ is $mn$.
\item
For $m,n\ge 3$, the complexity of $R_m(a,b,d) \setminus R_n(b,a,d)$ is $mn-(m-1)$.
\item
For $m,n\ge 3$, the complexity of $R_m(a,b,d) \oplus R_n(b,a,d)$ is $mn$.
\item
For $m,n\ge 3$, since any binary boolean operation can be expressed as a combination of the four operations above (and complement, which does not affect complexity), the complexity of $R_m(a,b,d) \circ R_n(b,a,d)$ is maximal for all binary boolean operations $\circ$.
\item
For $m,n\ge 3$, if $m \ne n$, then the complexity of $R_m(a,b,d) \circ R_n(a,b,d)$ is maximal for all binary boolean operations $\circ$.
\item
The complexity of $R_m(a,b,d)\cdot R_n(a,b,d)$ is $m+2^{n-2}$.
\ei
\end{theorem}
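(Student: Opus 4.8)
The plan is to compute the minimal DFA of the product directly from the standard concatenation construction. First I would form the $\eps$-NFA that runs $\cR_m = \cR_m(a,b,d)$ and, through an $\eps$-transition from its unique final state $m$ to the initial state $1$ of a disjoint copy $\cR_n = \cR_n(a,b,d)$, launches $\cR_n$; determinizing this $\eps$-NFA (operation $\deter$) gives a DFA whose states are pairs $(p,S)$ with $p\in\{1,\dots,m\}$ a state of $\cR_m$ and $S\subseteq\{1,\dots,n\}$ a set of states of $\cR_n$. On input $x$ the transition is $(p,S)x=(px,\,Sx\cup E)$, where $E=\{1\}$ if $px=m$ and $E=\emp$ otherwise, and $(p,S)$ is final iff $n\in S$. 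The whole argument rests on two structural facts supplied by the right-ideal property: in $\cR_m$ the state $m$ is an accepting sink (its quotient is $\Sig^*$), and in $\cR_n$ the state $n$ is an accepting sink.

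For the upper bound I would isolate the reachable and inequivalent states using these two sinks. Since the set component $S$ can acquire an element only by passing through the sink $m$ of $\cR_m$, and $m$ is never left once entered, every reachable state with $p\neq m$ has $S=\emp$; this yields at most the $m-1$ states $(1,\emp),\dots,(m-1,\emp)$. For $p=m$ we always have $1\in S$, and among these every state with $n\in S$ accepts $\Sig^*$, because $n$ is a sink of $\cR_n$ and hence stays in the set forever; all such states are therefore equivalent and collapse to one accepting state in the minimal DFA. The remaining states have $p=m$, $1\in S$, and $n\notin S$, so $S$ ranges over subsets of $\{1,\dots,n-1\}$ containing $1$, of which there are at most $2^{n-2}$. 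Summing gives $(m-1)+1+2^{n-2}=m+2^{n-2}$, the claimed value (and, as the argument shows, the largest attainable by a product of two right ideals of complexities $m$ and $n$).

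For the matching lower bound I would prove reachability and pairwise distinguishability of these $m+2^{n-2}$ states. The states $(1,\emp),\dots,(m-1,\emp)$ are reachable because $1,\dots,m-1$ are reachable in $\cR_m$, and a state with $n\in S$ is reached by first driving $\cR_m$ to $m$ and then sending a state of $\cR_n$ to $n$ via $d$. The crux is to reach all $2^{n-2}$ sets $\{1\}\cup T$ with $T\subseteq\{2,\dots,n-1\}$: here one must avoid the sink $n$, so only $a$ and $b$ are available, and each step forcibly re-inserts state $1$. I would induct on $|T|$, using that (i) $b$ fixes $1$ and cyclically permutes $\{2,\dots,n-1\}$, realizing every rotation of $T$, and (ii) $a$ shifts $T$ upward while inserting the element $2$, so that applying $a$ to a suitable reachable set of size $|T|-1$ produces the target once $b$ has rotated an element into position $2$. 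For distinguishability, the unique state with $n\in S$ is separated from all others by $\eps$; two states $(m,S),(m,S')$ with $n\notin S,S'$ differing in some $q\in\{2,\dots,n-1\}$ are separated by $b^{j}d$, which routes $q$ to $n-1$ and then to $n$ while fixing $1$ (so no re-inserted copy of $1$ ever reaches $n$); a state $(m,S)$ is separated from any $(p,\emp)$ with $p\neq m$ by $a^{\,n-2}d$, which makes the former accept yet keeps the latter away from $n$; and $(p,\emp),(p',\emp)$ are separated by a word distinguishing $p,p'$ in the minimal DFA $\cR_m$ followed by a reduction to a case already handled.

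I expect the reachability of the $2^{n-2}$ sets to be the main obstacle: the forced re-insertion of state $1$ at every step makes set sizes nondecreasing and forbids the usual ``delete a state'' maneuvers, so the induction must build each target from strictly smaller sets, exploiting the precise interplay between the upward shift induced by $a$ and the rotations induced by $b$.
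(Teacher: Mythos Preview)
Your plan treats only the final bullet, the product $R_m(a,b,d)\cdot R_n(a,b,d)$; the theorem as stated has fifteen claims, and the paper establishes each one separately in the sections that follow (quotients and syntactic semigroup in \S4.1, number and complexity of atoms via the \'atomaton and an interval-reachability argument in \S5, boolean operations via a symmetric-group criterion in \S6, product in \S7). As a proof of the full statement your plan is therefore incomplete.

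For the product claim itself, your approach is correct and essentially identical to the paper's. Both form the same $\eps$-NFA, determinize, and count reachable, pairwise distinguishable subsets. Your self-contained upper-bound argument from the two sinks is exactly the reasoning the paper simply cites from prior work on ideals. Your reachability induction on $|T|$---apply $a$ to shift and insert $2$, then powers of $b$ to rotate within $\{2,\dots,n-1\}$---is the same device the paper uses, phrased there as ``start with $S=\{i_1-i_0,\dots,i_k-i_0\}$ and apply $ab^{i_0-1}$.'' For distinguishability the differences are cosmetic: the paper separates $\{q_m,1\}\cup S$ from $\{q_m,1\}\cup T$ by $a^{n-1-j}d$ where you use $b^{j}d$, and it separates $\{q_k\}$ from $\{q_m,1\}\cup S$ by noting that the latter accepts a word with a single $d$ while the former needs two, which is your $a^{n-2}d$ observation in different clothing.
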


These claims are proved in the remainder of the paper. 

\section{Conditions for the Complexity of  Right Ideals}
\label{sec:cond}

We examine the conditions for the complexity of  a regular right ideal following the list introduced in~\cite{Brz13}.

\subsection{Properties of a Single Language}
\label{sec:1lan}

\noin {\bf A0 (Complexity of the Language):}
$R_n(a,d)$ has $n$ quotients because the DFA $\cR_n(a,d)$ is minimal. This holds since the non-final state $i$ accepts $a^{n-1-i}d$ and no other non-final state accepts this word, for $1\le i\le n-1$, and  all non-final states are distinguishable from the final state $n$. Hence no two states are equivalent.
\smallskip

\noin {\bf A1 (Cardinality of the Syntactic Semigroup):}
It was proved in~\cite{BrYe11} that the syntactic semigroup of a right ideal of complexity $n$ has cardinality at most $n^{n-1}$. To show $R_n(a,b,c,d)$ meets this bound, one first verifies the following:
\begin{remark}
\label{rem:transposition}
For $n\ge 3$,  the transposition $(1,2)$ in $\cR_n$ is induced by  $a^{n-2}b$.
\end{remark}

\begin{theorem}[Syntactic Semigroup]
The syntactic semigroup of the language $R_n(a,b,c,d)$ has cardinality $n^{n-1}$.
\end{theorem}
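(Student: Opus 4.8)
The plan is to identify the transition semigroup $T_{\cR_n}$ of $\cR_n(a,b,c,d)$ explicitly and show it is as large as the bound of~\cite{BrYe11} permits; it is isomorphic to the syntactic semigroup since $\cR_n$ is minimal. Every generator fixes the final state $n$ (the cycles $a,b$ omit $n$, and $c,d$ move only $n-1$), so every element of $T_{\cR_n}$ fixes $n$. Let $M_n$ denote the set of all transformations $t$ of $Q$ with $nt=n$; since such a $t$ fixes $n$ and may send each of $1,\dots,n-1$ to any of the $n$ states, we have $|M_n|=n^{n-1}$. As $T_{\cR_n}\subseteq M_n$ and $|M_n|=n^{n-1}$ matches the upper bound, it suffices to prove the reverse inclusion $M_n\subseteq T_{\cR_n}$, that is, that every transformation fixing $n$ is induced by some non-empty word.

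First I would build up the permutations, and then all transformations that keep $\{1,\dots,n-1\}$ inside itself. By Remark~\ref{rem:transposition}, the transposition $(1,2)$ is induced by $a^{n-2}b$, so together with the $(n-1)$-cycle $a=(1,\dots,n-1)$ and Proposition~\ref{prop:piccard} it generates the full symmetric group $S_{n-1}$ on $\{1,\dots,n-1\}$ (each of these permutations fixes $n$). Since $S_{n-1}$ is a finite group lying inside $T_{\cR_n}$, every inverse $\pi^{-1}$ is again a product of the generators. Adjoining the unitary transformation $c=(n-1\to 1)$ and invoking Proposition~\ref{prop:piccard2} applied to the set $\{1,\dots,n-1\}$, we obtain the entire transformation monoid $\cT_{\{1,\dots,n-1\}}$ of the $(n-1)^{n-1}$ maps whose image lies in $\{1,\dots,n-1\}$.

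It remains to reach the transformations that send some states to $n$, and this is where $d$ and the main difficulty enter. For a fixed $j\in\{1,\dots,n-1\}$, choosing $\pi\in S_{n-1}$ with $(n-1)\pi=j$ gives $\pi^{-1}d\pi=(j\to n)$: the word first routes $j$ to $n-1$, lets $d$ push it into the sink $n$, and restores every other state, so each unitary $(j\to n)$ lies in $T_{\cR_n}$. Now take an arbitrary $t\in M_n$ with $S=\{j: jt=n\}\ne\emptyset$. Since the states outside $S$ have at most $(n-1)-|S|<n-1$ images in $\{1,\dots,n-1\}$, some state $m$ avoids this image set; define $g\in\cT_{\{1,\dots,n-1\}}$ by $ig=it$ for $i\notin S$ and $ig=m$ for $i\in S$. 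Then $t=g\,(m\to n)$, because the factor $(m\to n)$ diverts exactly the states parked at $m$ (namely $S$) into $n$ while fixing the genuine images $it$. Both factors already lie in $T_{\cR_n}$, so $t$ does too, and combined with the case $S=\emptyset$ handled above this proves $M_n\subseteq T_{\cR_n}$.

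The hard part will be the last paragraph: the transformations mapping into the absorbing state $n$ must be produced from the single ``sink'' generator $d$, and the delicate point is to collapse precisely the states of $S$ onto one spare state $m$ first, so that applying $(m\to n)$ drags no unintended state into $n$. The earlier reductions to $S_{n-1}$ and to $\cT_{\{1,\dots,n-1\}}$ are immediate consequences of the quoted Propositions and the Remark.
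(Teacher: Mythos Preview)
Your argument is correct and complete. The one notational point worth tightening is the definition of $S$: as written, $S=\{j:jt=n\}$ always contains $n$ (since $nt=n$), so the case ``$S=\emptyset$'' never occurs and the count $(n-1)-|S|$ is off by one. Your reasoning makes clear that you intend $S=\{j\in\{1,\dots,n-1\}:jt=n\}$; with that reading every step goes through, including the existence of the spare state $m$ and the factorisation $t=g\,(m\to n)$.

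Your route differs from the paper's. The paper does not construct the elements of $M_n$ at all: it observes (for $n\ge 4$) that the generators of the auxiliary DFA $\cP_n$ from~\cite{BrYe11}---namely $(1,\dots,n-1)$, $(1,2)$, $(n-1\to 1)$, $(n-1\to n)$---are all realised in $\cR_n$ (the only nontrivial one being $(1,2)$, supplied by Remark~\ref{rem:transposition}), and then simply cites~\cite{BrYe11} for the fact that those four transformations already generate a semigroup of size $n^{n-1}$. Small cases $n\le 3$ are checked separately. Your proof is self-contained: you identify $T_{\cR_n}$ explicitly as the set $M_n$ of all transformations fixing $n$, building first $S_{n-1}$, then $\cT_{\{1,\dots,n-1\}}$ via Propositions~\ref{prop:piccard} and~\ref{prop:piccard2}, and finally the remaining transformations by conjugating $d$ and factoring through a spare state. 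What you gain is independence from the cited result (you are essentially reproving the relevant part of~\cite{BrYe11}); what the paper gains is brevity.
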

\begin{proof}
The cases $n\le 3$ are easily checked. For $n\ge 4$,
let the DFA $\cP_n$ be $\cP_n=(Q,\Sig,\delta, 1,\{n\})$, where $Q=\{1,\ldots,n\}$, $\Sig=\{a,b,c,d\}$, and
$a\colon(1,\ldots,n-1)$, $b\colon (1,2)$, ${c\colon(n-1 \rightarrow 1)}$ and 
${{d\colon(n-1\rightarrow n)}}$.
It was proved in~\cite{BrYe11} that  the syntactic semigroup of $P_n(a,b,c,d)$ has cardinality $n^{n-1}$. Since words in $\Sig^*$ can induce all the transformations of $\cP_n$ in $\cR_n(a,b,c,d)$,
the claim follows. \qed
\end{proof}
\smallskip

\noin {\bf A2 (Complexity of Quotients):} 
Each quotient of $R_n(a,d)$, except the quotient $\{a,d\}^*$, has complexity $n$, since the states $1,\ldots, n-1$ are strongly connected.
Hence the complexities of the quotients are as high as possible for right ideals.
\smallskip

\noin {\bf A3 (Number of Atoms):}
It was proved  in~\cite{BrTa12} that the number of atoms of $L$ is precisely the complexity of the reverse of $L$.
It was shown in~\cite{BJL13} that the maximal complexity of $L^R$ for right ideals is 
$2^{n-1}$.
For $n\le 3$ it is easily checked that our witness meets this bound.
For $n>3$, it was proved in~\cite{BrYe11} that the reverse of $R_n(a,d)$, and hence also of $R_n(a,b,c,d)$, reaches this bound.
\smallskip

\noin {\bf A4 (Complexity of Atoms):}
This is the topic of Section~\ref{sec:atoms}.

\subsection{Unary Operations}
\label{sec:unary}
\noin {\bf B1 (Reversal):}
See {\bf A3}.
\smallskip

\noin {\bf B2 (Star):}
It was proved in~\cite{BJL13} that the complexity of the star of a right ideal of complexity $n$ is at most $n+1$.

\begin{theorem}[Star]
\label{thm:star}
For $n\ge 2$, the complexity of $(R_n(a,d))^*$ is $n+1$.
\end{theorem}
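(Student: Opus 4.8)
The plan is to sidestep the general star construction by using the right-ideal property to collapse $L^*$ to something trivial. Write $L = R_n(a,d)$, whose alphabet is $\{a,d\}$. Since $L$ is a right ideal, $L\cdot L \subseteq L\{a,d\}^* = L$, and inductively $L^k \subseteq L$ for every $k \ge 1$; hence $L^+ = L$ and therefore $L^* = \{\eps\} \cup L$. Note that $\eps \notin L$, because the initial state $1$ of $\cR_n(a,d)$ is not final for $n \ge 2$, and that $L \ne \emptyset$, so adjoining $\eps$ is a genuine modification rather than a no-op.

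Next I would read off the quotients of $L^*$ directly. We have $\eps^{-1}L^* = L^*$, while for every non-empty word $w$ we have $w^{-1}L^* = w^{-1}\{\eps\} \cup w^{-1}L = w^{-1}L$. Thus the quotients of $L^*$ are the quotient $L^*$ itself together with the quotients of $L$ taken by non-empty words. By property \textbf{A0} the latter are exactly the $n$ state languages $K_1,\dots,K_n$ of $\cR_n(a,d)$: every state is reachable from state $1$ by a non-empty word, since $a^{n-1}$ returns $1$ to $1$, the states $1,\dots,n-1$ are strongly connected (\textbf{A2}), and state $n$ is reached by $a^{n-2}d$. This already bounds the number of quotients of $L^*$ by $n+1$, matching the upper bound of~\cite{BJL13}.

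It remains to check that $L^*$ is distinct from each $K_i$, so that the count is exactly $n+1$. The empty word distinguishes $L^*$ from every non-final quotient $K_i$ with $1 \le i \le n-1$, since $\eps \in L^*$ but $\eps \notin K_i$. The only final quotient is the absorbing one $K_n = \{a,d\}^*$ (of complexity $1$ by \textbf{A2}); here the word $a$ distinguishes it from $L^*$, because $a \in K_n$ but $a \notin L^*$ (indeed $\delta(1,a) = 2 \ne n$, so $a \notin L$, and $a \ne \eps$). Together with the $n$ pairwise-distinct quotients $K_1,\dots,K_n$ from \textbf{A0}, this gives exactly $n+1$ quotients, establishing the claim. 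The only real content is the identity $L^* = \{\eps\} \cup L$; the one point requiring care is the potential collapse of the new quotient onto $\{a,d\}^*$, which the word $a$ rules out, and everything else is bookkeeping over the already-verified facts \textbf{A0} and \textbf{A2}.
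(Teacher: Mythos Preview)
Your argument is correct and takes a genuinely different route from the paper. The paper builds an $\eps$-NFA for $(R_n(a,d))^*$ by adding a fresh initial/final state and an $\eps$-loop back to state~$1$, then runs the subset construction and checks that the $n+1$ singleton subsets $\{s\},\{1\},\dots,\{n\}$ are reachable and pairwise distinguishable. You instead exploit the right-ideal identity $L^* = \{\eps\} \cup L$ to read off the quotients of $L^*$ directly as $\{L^*\}\cup\{K_1,\dots,K_n\}$, reducing the problem to showing that $L^*$ is a new quotient. Your approach is shorter and more conceptual; the paper's approach is the generic machinery one would use without the ideal hypothesis, and has the mild advantage of making the DFA for $L^*$ explicit.

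One small inaccuracy: in the last paragraph you write $\delta(1,a)=2$, but for $n=2$ the letter $a$ acts as the identity on $\cR_2$, so $\delta(1,a)=1$. The conclusion $a\notin L$ is unaffected, since in either case $\delta(1,a)\ne n$; just phrase it as ``$\delta(1,a)\in\{1,\dots,n-1\}$'' to cover both cases cleanly.
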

\begin{proof}
The complexity of $R_1^*$ is 1.
For $n>1$, let $\cN_n$ be the $\eps$-NFA obtained by taking $\cR_n(a,d)$, adding a new  initial state $s$ which is also a final state, with the same transitions as state 1, and a transition from state $n$ to $1$ on $\varepsilon$. This NFA recognizes $(R_n(a,d))^*$. 
Let $N_n^{\deter}$ be the DFA obtained from $\cN_n$ by the subset construction, where only reachable states are used. 
We show the DFA $N_n^{\deter}$ has at least $n+1$ reachable and pairwise distinguishable states, and thus has exactly $n+1$ states, since $n+1$ is an upper bound.

Each state of $N_n^{\deter}$ is a subset of $Q \cup \{s\}$. The initial state is $\{s\}$. For $n=2$, we reach $\{1\}$ by $a$ and $\{2\}$ by $d$. For $n \ge 3$, from $\{s\}$ we reach $\{2\}$ by $a$, from $\{2\}$ we  reach $\{3\}$, $\{4\}$, \dots, $\{n-1\}$ and $\{1\}$ by words in $\{a\}^*$, and from $\{n-1\}$ we reach $\{n\}$ by $d$. Thus $n+1$ subsets are reachable.

Subset $\{s\}$ is distinguishable from $\{n\}$, since $a$ is not accepted from $\{s\}$. Since $\{s\}$ and $\{n\}$ are the only final states, they are distinguishable from all other states.
For $1 \le i,j \le n-1$, $i \ne j$, $\{i\}$ is distinguishable from $\{j\}$ since $a^{n-1-i}d$ is accepted from $\{i\}$ but not $\{j\}$. 
Thus the $n+1$ reachable subsets are pairwise distinguishable. It follows that $(R_n(a,d))^*$ has complexity $n+1$. \qed
\end{proof}

\subsection{Binary Operations}

\noin{\bf C1 (Boolean Operations):} 
See Section~\ref{sec:boolean}.
\smallskip

\noin{\bf C2 (Product):}
See Section~\ref{sec:product}.

\section{Complexity  of Atoms}
\label{sec:atoms}
In~\cite{BrTa12}, for the language stream $(L_n \mid n \ge 2)$ described after Definition~\ref{def:mostcomplex}, it
was proved that the atoms of $L_n$ have maximal complexity amongst all regular languages of complexity $n$. Our goal in this section is to prove that the atoms of $R_n(a,b,c,d)$ have maximal complexity amongst all regular right ideals of complexity $n$. We follow the same approach as~\cite{BrTa12}:
\be
\item
Derive upper bounds for the complexities of atoms in the case of right ideals.
\item
Describe the transition function of the \'atomaton of $R_n(a,b,c,d)$.
\item
Prove that certain strong-connectedness and reachability results hold for states of minimal DFAs of atoms of $R_n(a,b,c,d)$.
\item
Using these results, prove that the complexity of each atom of $R_n(a,b,c,d)$ meets the established bound.
\ee
In fact, many steps of the following proof are similar or identical to the proof for $L_n$ given in~\cite{BrTa12}. Rather than reproducing all the arguments in full detail, we refer to this paper when appropriate.

\subsection{Upper Bounds}
Observe that the co-basis of an atom cannot contain $\Sig^*$; if it did, then $\ol{\Sig^*} = \emp$ would be a term in the corresponding atomic intersection and so the intersection would be empty. Thus, since all right ideals have $\Sig^*$ as a quotient, every atom of a right ideal must contain $\Sig^*$ in its basis, rather than in its co-basis. It follows the co-basis of an atom of a right ideal is either empty or contains $r$ quotients, where $1 \le r \le n-1$. Bounds for complexity in each case are now given.

\begin{proposition}[Complexity Bounds for Atoms of Right Ideals]
\label{prop:atomqcomp}
Let $n\ge 1$, let $L$ be a right ideal with  complexity $n$ and let $A$ be an atom of $L$. 
\be
\item
If $\ol{\cB}(A) = \emp$, the complexity of $A$ is at most $2^{n-1}$.
\item
If $|\ol{\cB}(A)| = r$ for $1 \le r \le n-1$, the  complexity of $A$ is at most
\begin{equation}
\label{eq:bound}
 1 + \sum_{k=1}^{r} \sum_{h=k+1}^{k+n-r} \binom{n-1}{h-1}\binom{h-1}{k}. 
\end{equation}
\ee
\end{proposition}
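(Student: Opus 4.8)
The plan is to compute the quotients of an atom $A$ explicitly and bound their number by a direct counting argument, exploiting the constraints the right-ideal structure imposes. Write $S = \varphi(A) = \{i \mid K_i \in \cB(A)\}$ for the basis of $A$ in the sense of Proposition~\ref{prop:atomiso}, and relabel so that $K_n = \Sig^*$; then $L$ being a right ideal means state $n$ is a sink ($\delta(n,a)=n$ for all $a \in \Sig$) and, as noted just before the proposition, $n \in S$ for every atom. Since left quotients commute with the Boolean operations and $w^{-1}K_i = K_{\delta(i,w)}$, I would first establish that for every $w \in \Sig^*$,
\[ w^{-1}A = \bigcap_{m \in P} K_m \cap \bigcap_{m \in N} \ol{K_m}, \qquad P = \delta(S,w), \quad N = \delta(\ol{S},w), \]
with $\ol{S} = Q \setminus S$. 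Hence every quotient of $A$ is determined by the pair $(P,N)$, and the complexity of $A$ (its number of distinct quotients) is at most the number of admissible pairs, plus one for the empty quotient.

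Next I would record the constraints an admissible (nonempty) pair must satisfy. As $\delta(\cdot,w)$ is a transformation, $|P| \le |S| = n-r$ and $|N| \le |\ol{S}| = r$, where $r = |\ol{\cB}(A)|$; and since $n \in S$ with $n$ a sink, $n \in P$, so $|P| \ge 1$. If some $m \in P \cap N$, then $K_m \cap \ol{K_m} = \emp$ and $w^{-1}A = \emp$; thus every nonempty quotient forces $P \cap N = \emp$ (in particular $n \notin N$), while all the conflicting pairs collapse to the single quotient $\emp$.

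The count is then routine. For $r=0$ we have $\ol{S} = \emp$, so $N = \emp$ and $P = \delta(Q,w)$ is the image of a transformation; every such image contains $n$, and there are only $2^{n-1}$ subsets of $Q$ containing $n$, giving complexity at most $2^{n-1}$ (here no conflict can arise, so no extra term is needed). For $1 \le r \le n-1$, put $k = |N|$ and $h = |P| + |N| = |P \cup N|$; the constraints give $1 \le k \le r$ and $k+1 \le h \le k + (n-r)$. The number of disjoint pairs with $|N|=k$, $|P \cup N|=h$ and $n \in P$ is $\binom{n-1}{h-1}\binom{h-1}{k}$: choose the remaining $h-1$ members of $P \cup N$ from $\{1,\dots,n-1\}$, then choose the $k$-element set $N$ among those $h-1$ non-$n$ elements. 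Summing over $k$ and $h$ and adding one for $\emp$ yields exactly the bound~\eqref{eq:bound}.

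The main obstacle is the accounting for the empty quotient and keeping the size constraints tight enough to reproduce the stated formula rather than a weaker bound: I must verify that $|P| \le n-r$ (hence $h \le k+n-r$) is the correct upper limit, and that the single quotient $\emp$ is added once when $r \ge 1$ but not when $r=0$. No injectivity of the map $w \mapsto (P,N) \mapsto w^{-1}A$ is required; for an upper bound it suffices that each quotient arises from at least one admissible pair, so counting pairs can only over-estimate the complexity.
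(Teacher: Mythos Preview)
Your proposal is correct and follows essentially the same approach as the paper: both compute $w^{-1}A$ as an intersection of uncomplemented quotients indexed by $P=\delta(S,w)$ and complemented quotients indexed by $N=\delta(\ol S,w)$, observe that the right-ideal condition forces $n\in P$ (equivalently, $\Sig^*\in w^{-1}(\cB(A))$), and then count admissible pairs with $k=|N|$ and $h=|P\cup N|$. Your write-up simply spells out in full the counting that the paper summarizes in one paragraph with a reference to~\cite{BrTa12}.
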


These bounds can be derived using the same counting arguments as in~\cite{BrTa12}.
To summarize briefly, suppose $A$ is an atom and consider a quotient $w^{-1}A$. Since the quotient operation distributes over intersection, $w^{-1}A$ is an intersection of uncomplemented quotients from the set $w^{-1}(\cB(A)) = \{ w^{-1}K_i \mid K_i \in \cB(A)\}$ and complemented quotients from $w^{-1}(\ol{\cB}(A)) = \{w^{-1}K_i \mid K_i \in \ol{\cB}(A)\}$. 
In Equation (\ref{eq:bound}), $k$ represents the possible sizes of $w^{-1}(\ol{\cB}(A))$, while $h$ represents the possible sizes of $w^{-1}(\cB(A)) \cup w^{-1}(\ol{\cB}(A))$. The stated bounds follow from the observation that $\Sig^*$ must occur in $w^{-1}(\cB(A))$ but cannot occur in $w^{-1}(\ol{\cB}(A))$.

When $n \le 3$, the atoms of $R_n$ meet the bounds stated above:
\be
\item
For $R_1=\Sig^*$, there is only one atom, $\Sig^*$. It has an empty co-basis and meets the bound $2^{1-1}=1$. 
\item
For  $R_2=\{a,b,c\}^*d\{a,b,c,d\}^*$, the quotients are $K_1=\{a,b,c,d\}^*$ and 
$K_2=R_2$.
The atom $K_1\cap K_2=K_2$ has an empty co-basis and meets the bound $2^{2-1}=2$.
The only other atom is $K_1\cap\ol{K_2}$ and it meets the bound of Equation~(\ref{eq:bound}), which is also 2.
\item
For $n=3$, one verifies that input $b$ can be omitted, and that
the four atoms of $R_3(a,c,d)$ meet the required bounds.
\ee
Henceforth we assume that $n \ge 4$.

\subsection{Structure of the \'Atomaton}
The notion of an \emph{interval} will be useful in the following sections. If $U$ and $V$ are sets, the \emph{interval $[V,U]$ between $V$ and $U$} is the set of all subsets of $U$ that contain $V$. Intervals are sets of sets, but we often refer to them as \emph{collections of sets} to reduce confusion. If $V$ is not a subset of $U$, then $[V,U]$ is empty.

Let $\cA$ denote the \'atomaton of $R_n$. By Proposition \ref{prop:atomiso}, $\cA$ is isomorphic to $\cR^{\rev\deter\rev}_n$ by the map $\varphi$, and so we can treat the states of $\cA$ as subsets of the state set $Q$ of $\cR_n$. Under $\varphi$, the quotient $\Sig^*$ of $R_n$ corresponds to state $n$ of $\cR_n$. Hence, since all atoms of $R_n$ contain $\Sig^*$ in their basis, all states of $\cA$ are subsets of $Q$ that contain $n$. 
There are $2^{n-1}$ such subsets, and $\cA$ has $2^{n-1}$ states (since $R_n$ has $2^{n-1}$ atoms). Hence the set of states of $\cA$ is the interval $[\{n\},Q]$, that is, the collection of all subsets of $Q$ containing $n$. 
The initial atoms of $\cA$ are those that contain $L$ in their basis; under $\varphi$ these become subsets that contain state $1$, and thus the set of initial states is the interval $[\{1,n\},Q]$.
The final atom of $\cA$ is the atom whose basis contains all the quotients of $R_n$ that contain $\eps$, and no other quotients. The only quotient containing $\eps$ is $\Sig^*$, so under $\varphi$ the final atom becomes the subset $\{n\}$.

We have now established the set of states of the \'atomaton as well as the sets of initial and final states. Next we describe the transition function.
\begin{proposition}[States and Transitions of the \'Atomaton]
The \'atomaton of $R_n$ is $\cA = ([\{n\},Q],\Sig,\eta,[\{1,n\},Q],\{\{n\}\})$. If $S \in [\{n\},Q]$, then:
\be
\item
$\eta(S,a) = \displaystyle{\bigcup_{q \in S} \delta(q,a)}$, where $\delta$ is the transition function of $\cR_n$,
\item
$\eta(S,b) = \displaystyle{\bigcup_{q \in S} \delta(q,b)}$,
\item
If $S \cap \{1,n-1\} = \emp$, then:
    \be
    \item $\eta(S,c) = \{S,S \cup \{n-1\}\}$,
    \item $\eta(S \cup \{n-1\},c) = \emp$,
    \item $\eta(S \cup \{1\},c) = \emp$, and
    \item $\eta(S \cup \{1,n-1\},c) = \{S \cup \{1\}, S \cup \{1,n-1\}\}$.
    \ee
\item
If $S \cap \{n-1\} = \emp$, then:
    \be
    \item $\eta(S,d) = \emp$, and 
    \item $\eta(S \cup \{n-1\},d) = \{S, S \cup \{n-1\}\}$.
    \ee
\ee
\end{proposition}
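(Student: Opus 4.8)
Everything except the transition function $\eta$ was already pinned down in the discussion preceding the statement, so the task is to verify its four clauses. The plan is to first prove a single master formula valid for every letter and then read off the four clauses by specialization, distinguishing only the permutation letters $a,b$ from the unitary letters $c,d$.

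For the master formula I would use Proposition~\ref{prop:atomiso}: via $\varphi$ the átomaton $\cA$ is the reversal of the DFA $\cR_n^{\rev\deter}$, whose transition on a letter $x$ (inducing the transformation $t$ of $Q$) sends a subset $U$ to its full preimage $\{q \in Q \mid \delta(q,x) \in U\}$. Reversing this DFA, a transition $T \in \eta(S,x)$ of $\cA$ is exactly a transition $S = \{q \mid \delta(q,x)\in T\}$ of $\cR_n^{\rev\deter}$. Spelling out $\{q \mid qt \in T\} = S$ as ``$qt\in T$ for $q\in S$'' and ``$qt\notin T$ for $q\notin S$'' yields
\[ \eta(S,x) = \{\, T \mid St \subseteq T \text{ and } (Q\setminus S)t\cap T = \emp \,\} = [\,St,\ Q\setminus((Q\setminus S)t)\,]. \]
(The same formula follows directly from the átomaton definition $A_j \in \eta(A_i,x) \Leftrightarrow xA_j\subseteq A_i$, using that the quotient operation distributes over intersection and complement and that $x^{-1}K_k = K_{kt}$.) I would record that the lower endpoint always contains $n$, because $n\in S$ and each of $a,b,c,d$ fixes $n$; hence the whole interval lies inside the state set $[\{n\},Q]$, as it must, and every subset containing $n$ is a genuine state of $\cA$.

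With the master formula in hand the four clauses are short. The letters $a$ and $b$ induce the permutations $(1,\dots,n-1)$ and $(2,\dots,n-1)$ of $Q$; a permutation carries $Q\setminus S$ onto $Q\setminus St$, so the two endpoints coincide and the interval collapses to the single state $St = \bigcup_{q\in S}\delta(q,x)$, giving clauses~1 and~2. For the unitary maps $c\colon(n-1\rightarrow 1)$ and $d\colon(n-1\rightarrow n)$ I would substitute each listed shape of $S$ — distinguished by whether $1$ and/or $n-1$ belongs to it — into $St$ and $Q\setminus((Q\setminus S)t)$. In each subcase this produces either a two-element interval $[S',S'\cup\{n-1\}]=\{S',S'\cup\{n-1\}\}$ or, when the computed lower endpoint is not contained in the upper endpoint, the empty set; matching these against the listed forms reproduces clauses~3 and~4 verbatim.

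The one genuinely delicate point is the master formula itself: translating a single átomaton transition into the two-sided containment $St\subseteq T\subseteq Q\setminus((Q\setminus S)t)$ and checking that the resulting interval uses only subsets containing $n$, so that it really is a set of átomaton states. After that, the $c$- and $d$-computations are pure bookkeeping; the only things to watch are that $n\in S$ throughout and that membership of $1$ and of $n-1$ must be tracked as separate cases, since it is precisely these elements that determine whether an interval degenerates to a singleton, splits into two states, or becomes empty.
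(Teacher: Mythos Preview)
Your proposal is correct and follows essentially the same approach as the paper: the paper simply says to derive $\eta$ by tracking the transitions of $\cR_n$ through the reverse--determinize--reverse construction of Proposition~\ref{prop:atomiso}, and your master formula $\eta(S,x)=[St,\,Q\setminus((Q\setminus S)t)]$ is exactly what that tracking yields, made explicit. The case analysis for $c$ and $d$ that you outline checks out against the stated clauses.
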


The transition function of $\cA$ can be derived using the same method as in~\cite{BrTa12}; we take the transition function of $\cR_n$ and track how it changes as the automaton is reversed, determinized, then reversed again. 

\subsection{Strong-Connectedness and Reachability}
We now consider minimal DFAs of atoms of $R_n$. If $S \subseteq Q$ is a state of $\cA$, then its language is some atom $A_S$ of $R_n$.
If we take $\cA$ and change the set of initial states to $\{S\}$, we obtain an NFA $\cA_S = ([\{n\},Q],\Sig,\eta,\{S\},\{\{n\}\})$ that recognizes $A_S$. 
It was proved in~\cite{Brz63} that, if $\cN$ is any NFA that has no empty states and is such that $\cN^R$ is deterministic, then $\cN^D$ is minimal.
Since $\cA$ has no empty state and $\cA^R$ is deterministic, so is $\cA^R_S$; hence 
$\cA^{\deter}_S$ is the minimal DFA of $A_S$. 
Therefore we can determine the complexity of the atom $A_S$ of $R_n$ by constructing the NFA $\cA_S$, determinizing, and then counting the number of states of $\cA^{\deter}_S$.

The states of $\cA_S$ are subsets of $Q$ lying in the interval $[\{n\},Q]$; thus the states of $\cA^{\deter}_S$ are collections of subsets of $Q$ from this interval. In fact, they are not just arbitrary collections; we will see that every state of $\cA^{\deter}_S$ is a (possibly empty) subinterval of $[\{n\},Q]$. For conciseness, we refer to these subintervals of $[\{n\},Q]$ as $R_n$-intervals. The initial state of $\cA^{\deter}_S$ is the $R_n$-interval $[S,S] = \{S\}$. The next two results allow us to determine which other $R_n$-intervals are reachable from this initial state.

We assign a \emph{type} to each non-empty interval as follows: the type of $[V,U]$ is the ordered pair $(v,u)$, where $|V| = v$ and $|U| = u$. The empty interval has no type.

\begin{lemma}[Strong-Connectedness of Intervals]
\label{lem:connect}
If $A_S$ is an atom of $R_n$, then in its minimal DFA $\cA^{\deter}_S$, all $R_n$-intervals which have the same type are strongly connected by words in $\{a,b\}^*$.
\end{lemma}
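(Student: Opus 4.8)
The plan is to understand precisely how the letters $a$ and $b$ act on $R_n$-intervals in $\cA^{\deter}_S$, and then to reduce the statement to a transitivity property of a permutation group. Both $a$ and $b$ induce permutations of $Q$ (namely the cycles $(1,\dots,n-1)$ and $(2,\dots,n-1)$, each fixing $n$), so by the first two items of the proposition describing the transitions of $\cA$, each acts \emph{deterministically} on the states of $\cA$, sending a subset $W$ to its image $W\sigma$ under the permutation $\sigma\in\{a,b\}$. Consequently, in the determinized automaton $\cA^{\deter}_S$ the letter $\sigma$ sends a collection $[V,U]$ to $\{W\sigma \mid V\subseteq W\subseteq U\}$. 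Since $\sigma$ is a bijection of $Q$, this image is exactly $[V\sigma,U\sigma]$, again an $R_n$-interval (because $\sigma$ fixes $n$, so $n\in V\sigma$) of the \emph{same type} $(v,u)$. Thus $a$ and $b$ permute the $R_n$-intervals of each fixed type among themselves, which is what makes ``same type'' the right equivalence class.

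Next I would identify the group $G=\gen{a,b}$ acting on $Q$. By Remark~\ref{rem:transposition} the word $a^{n-2}b$ induces the transposition $(1,2)$, so $G$ contains both the $(n-1)$-cycle $a=(1,\dots,n-1)$ and $(1,2)$; by Proposition~\ref{prop:piccard} these generate the full symmetric group on $\{1,\dots,n-1\}$, and every element of $G$ fixes $n$. Hence $G=\mathrm{Sym}(\{1,\dots,n-1\})$. In view of the previous paragraph, proving the lemma amounts to showing that $G$ acts transitively on the set of pairs $(V,U)$ with $\{n\}\subseteq V\subseteq U\subseteq Q$, $|V|=v$, and $|U|=u$: if some $g\in G$ satisfies $V_1g=V_2$ and $U_1g=U_2$, then the word in $\{a,b\}^*$ realizing $g$ gives a path in $\cA^{\deter}_S$ from $[V_1,U_1]$ to $[V_2,U_2]$.

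To establish transitivity, since every element of $G$ fixes $n$ and $n$ lies in all the sets involved, I would strip off $n$ and work inside $\{1,\dots,n-1\}$. Given two nested pairs $V_i'\subseteq U_i'$ (with $V_i'=V_i\setminus\{n\}$ and $U_i'=U_i\setminus\{n\}$) of equal cardinalities $v-1$ and $u-1$, a suitable permutation is assembled block by block: match $V_1'$ bijectively to $V_2'$, the annulus $U_1'\setminus V_1'$ to $U_2'\setminus V_2'$, and the complement $\{1,\dots,n-1\}\setminus U_1'$ to $\{1,\dots,n-1\}\setminus U_2'$; the cardinalities agree in each block, so the union of these three bijections is the desired $g\in G$. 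Finally, strong connectedness (paths in both directions) follows because $G$ is a finite group: the inverse $g^{-1}$ equals a positive power $g^{\,\mathrm{ord}(g)-1}$ and is therefore also realized by a word in $\{a,b\}^*$, giving a return path. The only genuinely delicate points are the bookkeeping of the first paragraph---checking that the determinized $a,b$-transitions really are the interval-image maps and stay within $R_n$-intervals of the same type---and the group identification; the transitivity argument itself is then routine combinatorics.
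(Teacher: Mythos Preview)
Your proposal is correct and follows essentially the same approach as the paper: identify $\gen{a,b}$ as the full symmetric group on $\{1,\dots,n-1\}$ (fixing $n$) via Remark~\ref{rem:transposition} and Proposition~\ref{prop:piccard}, then build the required permutation block by block from $V_1'$ to $V_2'$, from $U_1'\setminus V_1'$ to $U_2'\setminus V_2'$, and on the complements. Your write-up is in fact more careful than the paper's in two respects: you spell out explicitly why $a$ and $b$ act on $R_n$-intervals by $[V,U]\mapsto[V\sigma,U\sigma]$ in $\cA^{\deter}_S$ (the paper leaves this implicit), and you note that the return path comes from $g^{-1}=g^{\mathrm{ord}(g)-1}$, which justifies the word ``strongly'' in the statement.
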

\begin{proof}
Since $a$ induces the cycle $(1,\dotsc,n-1)$ and $a^{n-2}b$ induces the transposition $(1,2)$, we see that by Proposition \ref{prop:piccard2} and Remark \ref{rem:transposition}, words in $\{a,b\}^*$ can induce any permutation of $Q$ that fixes $\{n\}$. 

Let $[V_1,U_1]$ and $[V_2,U_2]$ be $R_n$-intervals of the same type. We can assume they are non-empty, and thus $V_1 \subseteq U_1$ and $V_2 \subseteq U_2$. Since these intervals have the same type, it follows $|V_1| = |V_2|$ and $|U_1 \setminus V_1| = |U_2 \setminus V_2|$. Hence there exists a bijection $\pi\colon Q \to Q$ that maps $V_1$ onto $V_2$ and $U_1 \setminus V_1$ onto $U_2 \setminus V_2$. Furthermore, since these are $R_n$-intervals, we have $n \in V_1 \cap V_2$. Thus without loss of generality we can assume $\pi$ fixes $n$ and maps $V_1 \setminus \{n\}$ onto $V_2 \setminus \{n\}$. Since the bijection $\pi$ is a permutation of $Q$ that fixes $\{n\}$, it can be induced by words in $\{a,b\}^*$, and the result follows. \qed
\end{proof}

\begin{lemma}[Reachability]
\label{lem:reach}
If $A_S$ is an atom of $R_n$, then in its minimal DFA $\cA^{\deter}_S$, the following holds:
From an $R_n$-interval of type $(v,u)$, 
if $v \ge 2$ we can reach an $R_n$-interval of type $(v-1,u)$, 
and if $u \le n-2$ we can reach an $R_n$-interval of type $(v,u+1)$.
\end{lemma}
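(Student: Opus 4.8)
The plan is to work inside the minimal DFA $\cA^{\deter}_S$, whose states are $R_n$-intervals and whose transition on a letter $x$ is $\eta^{\deter}(\cI,x)=\bigcup_{T\in\cI}\eta(T,x)$. I would realize each required type change with a single letter: the letter $d$ to pass from type $(v,u)$ to $(v-1,u)$, and the letter $c$ to pass from $(v,u)$ to $(v,u+1)$, first using the Strong-Connectedness Lemma (Lemma~\ref{lem:connect}) to move to an $R_n$-interval of the \emph{same} type on which that letter behaves nicely.

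First I would analyze $d$. Let $[V,U]$ be an $R_n$-interval with $n-1\in V$, so that every $T\in[V,U]$ contains $n-1$ and hence $\eta(T,d)=\{T\setminus\{n-1\},T\}$ by the \'atomaton transition function. Taking the union over $T$ gives
\[ \eta^{\deter}([V,U],d)=[V,U]\cup[V\setminus\{n-1\},U\setminus\{n-1\}]. \]
Splitting the subsets $W$ with $V\setminus\{n-1\}\subseteq W\subseteq U$ according to whether $n-1\in W$ yields the identity $[V,U]\cup[V\setminus\{n-1\},U\setminus\{n-1\}]=[V\setminus\{n-1\},U]$, so $d$ sends $[V,U]$ to the $R_n$-interval $[V\setminus\{n-1\},U]$ of type $(v-1,u)$.

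Next I would analyze $c$ under the hypothesis $U\cap\{1,n-1\}=\emp$, so that every $T\in[V,U]$ has $T\cap\{1,n-1\}=\emp$ and thus $\eta(T,c)=\{T,T\cup\{n-1\}\}$. The same kind of splitting gives
\[ \eta^{\deter}([V,U],c)=[V,U]\cup[V\cup\{n-1\},U\cup\{n-1\}]=[V,U\cup\{n-1\}], \]
an $R_n$-interval of type $(v,u+1)$ since $n-1\notin U$. To supply the preconditions I would invoke Lemma~\ref{lem:connect}: from any reachable interval of type $(v,u)$, words in $\{a,b\}^*$ induce every permutation of $Q$ fixing $n$, so I can move to any $R_n$-interval of that type. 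When $v\ge2$ the lower set contains an element other than $n$, which a suitable permutation carries onto $n-1$, enabling $d$; when $u\le n-2$ the upper set omits at least two elements of $\{1,\dots,n-1\}$, two of which a permutation carries onto $1$ and $n-1$ (valid since $n\ge4$), enabling $c$. The hypotheses $v\ge2$ and $u\le n-2$ are precisely what make these permutations available.

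The main obstacle I anticipate is the pair of interval-image computations: verifying that each union of two sub-intervals collapses to a single interval of the predicted type. This is exactly where the special shapes of the $c$- and $d$-transitions matter, and where the condition $U\cap\{1,n-1\}=\emp$ is indispensable, since a subset containing exactly one of $1$ and $n-1$ maps to $\emp$ under $c$ and would destroy the interval structure if it were allowed into $[V,U]$.
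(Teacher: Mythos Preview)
Your proposal is correct and follows essentially the same approach as the paper: use Lemma~\ref{lem:connect} to reach an $R_n$-interval of the same type with $n-1\in V$ (respectively, $U\cap\{1,n-1\}=\emp$), then apply $d$ (respectively, $c$) to obtain $[V\setminus\{n-1\},U]$ (respectively, $[V,U\cup\{n-1\}]$). The paper simply asserts these images without justification, so your explicit verification that the two sub-interval unions collapse to single intervals, and your explanation of why $v\ge2$ and $u\le n-2$ guarantee the required permutations exist, are welcome elaborations rather than a different argument.
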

\begin{proof}
Let $[V,U]$ be an $R_n$-interval of type $(v,u)$. If $v \ge 2$, then by Lemma \ref{lem:connect} we can reach an $R_n$-interval $[V',U']$ of type $(v,u)$ such that $n-1 \in V$. Then by  $d$ we can reach the $R_n$-interval $[V' \setminus \{n-1\},U']$ of type $(v-1,u)$. Note that $v \ge 2$ is required: if $v = 1$, then $V = \{n\}$ since $R_n$-intervals must contain $n$.

If $u \le n-2$, then again by Lemma \ref{lem:connect} we can reach an $R_n$-interval $[V',U']'$ of type $(v,u)$ such that $U \cap \{1,n-1\} = \emp$. Then by input $c$ we can reach the $R_n$-interval $[V',U' \cup \{n-1\}]$ of type $(v,u+1)$. \qed
\end{proof}

\subsection{Counting Reachable Intervals}
For each atom $A_S$ of $R_n$, we count the number of reachable $R_n$-intervals in the minimal DFA $\cA^{\deter}_S$. 
We will see that, for each atom, the number of reachable $R_n$-intervals in the minimal DFA matches the upper bounds on complexity stated in Proposition \ref{prop:atomqcomp}. This shows that every state of $\cA^{\deter}_S$ is an $R_n$-interval, as we claimed earlier, and proves that $A_S$ has maximal complexity.

\begin{theorem}[Atoms, Empty Co-Basis]
For $n \ge 1$, the atom of $R_n$ with an empty co-basis has complexity $2^{n-1}$.
\end{theorem}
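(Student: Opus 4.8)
The plan is to apply the $R_n$-interval machinery developed above to the single atom in question. First I would identify which state of the átomaton this atom corresponds to. Since $\ol{\cB}(A) = \emp$ means every quotient of $R_n$ lies in the basis of $A$, the isomorphism $\varphi$ of Proposition~\ref{prop:atomiso} sends $A$ to the full set $S = Q$. Thus it suffices to build the NFA $\cA_Q$, determinize it, and count the reachable $R_n$-intervals of $\cA^{\deter}_Q$: by the argument preceding Lemma~\ref{lem:connect}, $\cA^{\deter}_Q$ is automatically minimal, so its number of reachable states equals the complexity of $A$. The cases $n \le 3$ were already settled when the upper bounds were verified, so I would assume $n \ge 4$ from here on.

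Next I would trace the reachable intervals starting from the initial $R_n$-interval $[Q,Q]$, which has type $(n,n)$. Because its upper set already has size $n$, the ``increase $u$'' move of Lemma~\ref{lem:reach} is unavailable (it requires $u \le n-2$), while the ``decrease $v$'' move applies whenever $v \ge 2$. Iterating the decrease-$v$ move produces $R_n$-intervals of every type $(v,n)$ for $1 \le v \le n$, and Lemma~\ref{lem:connect} then supplies, within each such type, every interval of that type. Since type $(v,n)$ forces the upper set to equal $Q$, the reachable intervals are precisely the $[V,Q]$ with $\{n\} \subseteq V \subseteq Q$.

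I would then count these. For each $v$ there are $\binom{n-1}{v-1}$ choices of $V$ with $n \in V$ and $|V| = v$, so the number of reachable intervals is
\[
\sum_{v=1}^{n} \binom{n-1}{v-1} = 2^{n-1}.
\]
Because $\cA^{\deter}_Q$ is minimal, these reachable states are pairwise distinguishable, giving complexity at least $2^{n-1}$; combined with the upper bound of $2^{n-1}$ from Proposition~\ref{prop:atomqcomp}, the complexity is exactly $2^{n-1}$.

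The point needing care — the potential obstacle — is whether the reachability moves could ever leave the family of intervals with upper set $Q$ and thereby yield states beyond these $2^{n-1}$. In fact this is settled for free: minimality of $\cA^{\deter}_Q$ together with the upper bound caps the number of reachable states at $2^{n-1}$, so the intervals $[V,Q]$ already exhaust all reachable states, and I never need to analyse the $c$-transitions at all. The only genuine work is confirming that all $2^{n-1}$ of these intervals \emph{are} reachable, which follows directly from the decrease-$v$ step of Lemma~\ref{lem:reach} and the strong-connectedness of Lemma~\ref{lem:connect}.
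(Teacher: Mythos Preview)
Your proposal is correct and follows essentially the same approach as the paper: identify the empty-co-basis atom with $S=Q$, start from the interval $[Q,Q]$ of type $(n,n)$, use Lemma~\ref{lem:reach} to step down through types $(v,n)$ for $v=n,n-1,\ldots,1$, invoke Lemma~\ref{lem:connect} to get all intervals of each such type, and then sum $\sum_{v=1}^n\binom{n-1}{v-1}=2^{n-1}$ against the upper bound of Proposition~\ref{prop:atomqcomp}. Your additional remarks (that the increase-$u$ move is blocked, and that the upper bound spares you from analysing $c$-transitions) are correct elaborations but not substantive departures from the paper's argument.
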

\begin{proof}
Since the cases for $n<4$ have already been handled, assume that $n\ge 4$.
The atom with an empty co-basis is $A_Q$. Consider $\cA^{\deter}_Q$, the minimal DFA of this atom. The initial state of this DFA is the $R_n$-interval $[Q,Q]$ of type $(n,n)$.

By Lemma \ref{lem:reach}, we can reach $R_n$-intervals of types $(n-1,n)$, $(n-2,n)$, \dots, $(1,n)$. By Lemma \ref{lem:connect} we can reach all $R_n$-intervals of these types. There are $\binom{n-1}{k-1}$ $R_n$-intervals of type $(k,n)$, since if $[V,U]$ is an $R_n$-interval then $V$ must contain $n$ and the remaining $k-1$ elements are chosen arbitrarily from $U \setminus \{n\}$. Thus the total number of reachable $R_n$-intervals is at least
\[ \sum_{k=1}^{n} \binom{n-1}{k-1} = 2^{n-1}. \]
Thus $A_Q$ has at least $2^{n-1}$ quotients. By Proposition \ref{prop:atomqcomp}, $2^{n-1}$ is an upper bound on the number of quotients of $A_Q$, and thus $A_Q$ has exactly $2^{n-1}$ quotients. \qed
\end{proof}

Recall that there are no atoms of $R_n$ with a co-basis of size $n$, since each atom has $\Sig^*$ in its basis. We consider atoms with between $1$ and $n-1$ quotients in their co-basis.

\begin{theorem}[Atoms, Non-Empty Co-Basis]
For $n \ge 2$, each atom of $R_n$ with a co-basis of size $r$, where $1 \le r \le n-1$, has quotient complexity
\[ f(n,r) = 1 + \sum_{k=1}^{r} \sum_{h=k+1}^{k+n-r} \binom{n-1}{h-1}\binom{h-1}{k}. \]
\end{theorem}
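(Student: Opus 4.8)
The plan is to mimic the empty co-basis proof: fix an atom $A_S$ with $|\ol{\cB}(A_S)| = r$, realize its minimal DFA as $\cA^{\deter}_S$, and count its reachable states, showing the total is $f(n,r)$. (The cases $n \le 3$ are already settled, so assume $n \ge 4$.) Since $\cA^{\rev}_S$ is deterministic and $\cA_S$ has no empty states, $\cA^{\deter}_S$ is minimal, so its complexity is exactly the number of reachable $R_n$-intervals, plus the empty interval if it is reachable. Under $\varphi$ the atom $A_S$ corresponds to a set $S$ with $n \in S$ and $|S| = n - r$, so the initial state of $\cA^{\deter}_S$ is the interval $[S,S]$ of type $(n-r, n-r)$.

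First I would determine the reachable types. Starting from $(n-r,n-r)$ and applying Lemma~\ref{lem:reach}, the decrease-$v$ move (via $d$) lets $v$ drop to any value in $\{1, \dots, n-r\}$, while the increase-$u$ move (via $c$) lets $u$ rise to any value in $\{n-r, \dots, n-1\}$; combining these reaches every type $(v,u)$ in the rectangle $1 \le v \le n-r$, $n-r \le u \le n-1$ (here $v \le u$ holds automatically, since $v \le n-r \le u$). By Lemma~\ref{lem:connect}, once a type is reached, every $R_n$-interval of that type is reachable. Since there are $\binom{n-1}{u-1}\binom{u-1}{v-1}$ intervals of type $(v,u)$, the number of reachable non-empty intervals is $\sum_{v=1}^{n-r}\sum_{u=n-r}^{n-1}\binom{n-1}{u-1}\binom{u-1}{v-1}$.

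Next I would account for the empty interval, which supplies the leading $1$ in $f(n,r)$ and is the one place the argument genuinely departs from the empty co-basis case. If a reachable interval $[V,U]$ has $n-1 \notin U$, then no set in $[V,U]$ contains $n-1$, so by the $d$-transition of the \'atomaton the whole interval maps to $\emp$; hence the empty interval is reachable. Such intervals occur in the rectangle precisely because $u$ may be as small as $n-r \le n-1$, letting $U$ avoid $n-1$ --- which is exactly why no empty interval arises when $r = 0$, where $u = n$ forces $U = Q \ni n-1$. The empty interval is plainly distinct from every non-empty one, so it contributes one extra state.

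Finally I would match this count to the closed form. The natural correspondence sends each reachable interval $[V,U]$ to the pair $(V, Q \setminus U)$, with $V$ playing the role of the uncomplemented quotients of $w^{-1}A_S$ and $Q\setminus U$ the role of its complemented ones. Writing $k = |Q \setminus U| = n - u$ for the co-basis size of the quotient and $h = |V| + k = v + n - u$ for its total support, the constraints $1 \le v \le n-r$ and $n-r \le u \le n-1$ become exactly $1 \le k \le r$ and $k+1 \le h \le k + n - r$, and the subset-of-a-subset identity gives $\binom{n-1}{h-1}\binom{h-1}{k} = \binom{n-1}{k}\binom{n-1-k}{v-1} = \binom{n-1}{u-1}\binom{u-1}{v-1}$; thus the reachable non-empty intervals number $\sum_{k=1}^{r}\sum_{h=k+1}^{k+n-r}\binom{n-1}{h-1}\binom{h-1}{k}$, and with the empty interval this totals $f(n,r)$. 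By Proposition~\ref{prop:atomqcomp}, $f(n,r)$ is also an upper bound, so the states counted above already exhaust $\cA^{\deter}_S$; this proves the complexity is exactly $f(n,r)$ and, as a bonus, confirms the earlier claim that every state of $\cA^{\deter}_S$ is an $R_n$-interval. I expect the main obstacle to be precisely this last identification --- getting the interval-to-$(h,k)$ correspondence right so the reachable count meets the bound exactly --- together with the careful bookkeeping for the empty interval; the reachability of the non-empty intervals themselves follows directly from the two lemmas.
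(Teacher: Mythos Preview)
Your proposal is correct and follows essentially the same approach as the paper: start from the interval $[S,S]$ of type $(n-r,n-r)$, use Lemmas~\ref{lem:connect} and~\ref{lem:reach} to reach all $R_n$-intervals of types $(v,u)$ with $1\le v\le n-r$ and $n-r\le u\le n-1$, count them as $\sum_{u=n-r}^{n-1}\sum_{v=1}^{n-r}\binom{n-1}{u-1}\binom{u-1}{v-1}$, add one for the empty interval (reached from $[\{n\},Q\setminus\{n-1\}]$ by $d$), and invoke the upper bound of Proposition~\ref{prop:atomqcomp}. The paper omits the algebraic manipulation and simply asserts the two sums agree; your substitution $k=n-u$, $h=v+k$ together with the subset-of-a-subset identity is exactly the computation the paper leaves to the reader.
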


\begin{proof}
Since the calculations  here are nearly identical to those of~\cite{BrTa12},  we omit most of the details. 
Let $A_S$ be an atom, where $S$ is a proper subset of $Q$ that contains $n$. The minimal DFA of $A_S$ is $\cA^{\deter}_S$, and its initial state is the $R_n$-interval $[S,S]$ of type $(n-r,n-r)$. By Lemmas \ref{lem:connect} and \ref{lem:reach} and a counting argument, one can show that the number of non-empty reachable $R_n$-intervals is at least
\[ \sum_{u=n-r}^{n-1}\sum_{v=1}^{n-r}\binom{n-1}{u-1}\binom{u-1}{v-1}. \]
If $[V,U]$ is a non-empty reachable $R_n$-interval, $v=|V|$ and $u=|U|$ are the possible sizes of $V$ and $U$. Algebraic manipulation shows that the bound above is equal to
\[ \sum_{k=1}^{r}\sum_{h=k+1}^{k+n-r}\binom{n-1}{h-1}\binom{h-1}{k}. \]
That is, the number of non-empty reachable $R_n$-intervals is at least $f(n,r)-1$. The empty interval is also reachable (for example, by input $d$ from the interval $[\{n\},Q \setminus\{n-1\}]$) and thus the number of reachable intervals is at least $f(n,r)$. Since $f(n,r)$ is an upper bound by Proposition \ref{prop:atomqcomp}, the result follows. \qed
\end{proof}

Table~\ref{tab:atomcomp} shows the bounds for right ideals (first entry) and compares them to those of regular languages (second entry). An asterisk indicates the case is impossible for right ideals. The \emph{ratio} row shows the ratio $m_n/m_{n-1}$ for $n \ge 2$, where $m_i$ is the $i^{\rm{th}}$ entry in the \emph{max} row.
The entries that are maximal for a given $n$ are shown in boldface type.

\begin{table}[ht]
\caption{Maximal quotient complexity of atoms of right ideals.}
\label{tab:atomcomp}
\begin{center}
$
\begin{array}{| c| c|c| c|c| c|c| c|c|}    
\hline
\ \ n \ \ 
&\ \ 1 \ \ &\ \ 2 \ \ &  \ 3 \ &\ 4 \ &\ 5 \ &  \ 6 \ &\ 7 \ &\cdots  
\\
\hline
\hline  
$r=0$
&  \bf 1/1 &  \bf 2/3 & \ 4/7  \ & \ 8/15 \  & \ 16/31 \ & \ 32/63  \  & \  64/127 \ & \cdots\\
\hline  
$r=1$
&  \bf \ast/1 & \bf 2/3  & \ \bf 5/10  \ & \ 13/29  \  & \ 33/76 \ & \  81/187  \  & \  193/442  \ & \cdots\\
\hline  
$r=2$ 
&  &  \bf \ast /3 & \ 4\bf /10 \ & \ \bf 16/43  \  & \ \bf 53/141  \ & \ 156/406   \  & \ 427/1,086  \ & \cdots\\
\hline  
$r=3$
&  &  & \ \ast/7 \ & \  8/29 \  & \ 43/\bf 141 \ & \ \bf 166/501  \  & \ \bf  542/1,548  \ & \cdots\\
\hline  
$r=4$
&   &  & \  \ & \  \ast/15 \  & \ 16/76 \ & \ 106/406  \  & \ 462/\bf 1,548   \ & \cdots\\
\hline  
$r=5$
&   &  & \   \ & \ \  & \ \ast/31 \ & \ 32/187  \  & \  249/1,086   \ & \ \cdots\\
\hline
$r=6$
&   &  & \   \ & \ \  & \ \ & \ \ast/63  \  & \  64/442   \ & \ \cdots\\
\hline
\ max \
& 1/1 & 2/3  & \ 5/10 \ & \ 16/43  \  & \ 53/141 \ & \ 166/501\  & \ 542/1,548 \ &\cdots \\
\hline  
ratio
& - & 2/3  & \ 2.50/3.33 \ & \ 3.20/4.30  \  & \ 3.31/3.28 \ & \ 3.13/3.55 \  & \ 3.27/3.09 \ & \cdots \\
\hline
\end{array}
$
\end{center}
\end{table}

\section{Boolean Operations}
\label{sec:boolean}
Since $K_n\cup K_n=K_n\cap K_n=K_n$, and 
$K_n\setminus K_n=K_n \oplus K_n=\emp$, two different languages have to be used
to reach the bounds for boolean operations if $m=n$. 
Figure~\ref{fig:binary} shows the  DFAs $\cR_4(a,b,d)$ and  $\cR_5(b,a,d)$. 
The 
direct product  of $\cR_4(a,b,d)$ and  $\cR_5(b,a,d)$ is  in
Figure~\ref{fig:cross}, where transitions under $a$ and $d$ are shown with solid lines and under $b$, with dotted lines. Self-loops are omitted.

In general, let $\cR_m = \cR_m(a,b,d)$,  $\cR_n = \cR_n(b,a,d)$,
and $\cR_{m,n} = \cR_m \times \cR_n = (Q_m \times Q_n, \Sig, \delta, (1,1), F)$ with $\delta((i,j),\sig) = (\delta_m(i,\sig),\delta_n(j,\sig))$, where $\delta_m$ ($\delta_n$) is the transition function of $\cR_m$ ($\cR_n$). Depending on $F$, this DFA recognizes different boolean operations with $R_m$ and $R_n$.

\begin{figure}[hbt]
\begin{center}
\setlength{\unitlength}{0.00039370in}
\begingroup\makeatletter\ifx\SetFigFont\undefined%
\gdef\SetFigFont#1#2#3#4#5{%
  \reset@font\fontsize{#1}{#2pt}%
  \fontfamily{#3}\fontseries{#4}\fontshape{#5}%
  \selectfont}%
\fi\endgroup%
{\renewcommand{\dashlinestretch}{30}
\begin{picture}(11110,2486)(0,-10)
\put(10399,2216){\makebox(0,0)[lb]{\smash{{\SetFigFont{9}{10.8}{\familydefault}{\mddefault}{\updefault}$a,b,d$}}}}
\put(8554.500,1870.929){\arc{394.717}{2.4948}{6.9299}}
\blacken\thicklines
\path(8714.033,1892.097)(8712.000,1752.000)(8785.998,1870.977)(8714.033,1892.097)
\thinlines
\put(10797.500,1870.929){\arc{394.717}{2.4948}{6.9299}}
\blacken\thicklines
\path(10957.033,1892.097)(10955.000,1752.000)(11028.998,1870.977)(10957.033,1892.097)
\thinlines
\put(851.500,1862.929){\arc{394.717}{2.4948}{6.9299}}
\blacken\thicklines
\path(1011.033,1884.097)(1009.000,1744.000)(1082.998,1862.977)(1011.033,1884.097)
\thinlines
\put(1931.500,1847.929){\arc{394.717}{2.4948}{6.9299}}
\blacken\thicklines
\path(2091.033,1869.097)(2089.000,1729.000)(2162.998,1847.977)(2091.033,1869.097)
\thinlines
\put(7466.500,1885.929){\arc{394.717}{2.4948}{6.9299}}
\blacken\thicklines
\path(7626.033,1907.097)(7624.000,1767.000)(7697.998,1885.977)(7626.033,1907.097)
\thinlines
\put(6401.500,1870.929){\arc{394.717}{2.4948}{6.9299}}
\blacken\thicklines
\path(6561.033,1892.097)(6559.000,1752.000)(6632.998,1870.977)(6561.033,1892.097)
\thinlines
\put(4194,1454){\ellipse{630}{630}}
\put(3065,1466){\ellipse{630}{630}}
\put(8552,1460){\ellipse{630}{630}}
\put(9673,1467){\ellipse{630}{630}}
\put(6402,1452){\ellipse{630}{630}}
\put(7459,1470){\ellipse{630}{630}}
\put(10787,1459){\ellipse{630}{630}}
\put(10788,1457){\ellipse{540}{540}}
\put(847,1435){\ellipse{630}{630}}
\put(1930,1430){\ellipse{630}{630}}
\put(4195,1451){\ellipse{540}{540}}
\path(2172,1654)(2802,1654)
\blacken\thicklines
\path(2667.000,1616.500)(2802.000,1654.000)(2667.000,1691.500)(2667.000,1616.500)
\thinlines
\path(1047,1661)(1677,1661)
\blacken\thicklines
\path(1542.000,1623.500)(1677.000,1661.000)(1542.000,1698.500)(1542.000,1623.500)
\thinlines
\path(7721,1685)(8306,1685)
\blacken\thicklines
\path(8171.000,1647.500)(8306.000,1685.000)(8171.000,1722.500)(8171.000,1647.500)
\thinlines
\path(8817,1677)(9402,1677)
\blacken\thicklines
\path(9267.000,1639.500)(9402.000,1677.000)(9267.000,1714.500)(9267.000,1639.500)
\thinlines
\path(6627,1692)(7212,1692)
\blacken\thicklines
\path(7077.000,1654.500)(7212.000,1692.000)(7077.000,1729.500)(7077.000,1654.500)
\thinlines
\path(5577,1459)(6064,1459)
\blacken\thicklines
\path(5944.000,1429.000)(6064.000,1459.000)(5944.000,1489.000)(5944.000,1429.000)
\thinlines
\path(12,1452)(499,1452)
\blacken\thicklines
\path(379.000,1422.000)(499.000,1452.000)(379.000,1482.000)(379.000,1422.000)
\thinlines
\path(3372,1459)(3867,1459)
\blacken\thicklines
\path(3732.000,1421.500)(3867.000,1459.000)(3732.000,1496.500)(3732.000,1421.500)
\thinlines
\path(2810,1279)(2225,1279)
\blacken\thicklines
\path(2360.000,1316.500)(2225.000,1279.000)(2360.000,1241.500)(2360.000,1316.500)
\thinlines
\path(10002,1467)(10497,1467)
\blacken\thicklines
\path(10362.000,1429.500)(10497.000,1467.000)(10362.000,1504.500)(10362.000,1429.500)
\thinlines
\path(2922,1189)(2921,1188)(2919,1186)
	(2915,1183)(2909,1178)(2900,1171)
	(2888,1162)(2874,1150)(2857,1136)
	(2836,1120)(2813,1102)(2788,1083)
	(2760,1063)(2730,1041)(2699,1019)
	(2666,996)(2632,974)(2596,951)
	(2559,929)(2521,907)(2482,886)
	(2442,866)(2400,846)(2356,828)
	(2310,811)(2262,795)(2213,780)
	(2160,767)(2106,757)(2049,748)
	(1991,742)(1932,739)(1869,739)
	(1809,743)(1750,750)(1695,759)
	(1643,771)(1594,785)(1547,800)
	(1503,817)(1462,835)(1422,854)
	(1384,875)(1348,896)(1314,918)
	(1280,940)(1249,963)(1218,985)
	(1190,1007)(1163,1029)(1139,1049)
	(1116,1068)(1097,1085)(1079,1100)
	(1065,1113)(1054,1124)(1045,1132)(1032,1144)
\blacken\thicklines
\path(1156.634,1079.987)(1032.000,1144.000)(1105.763,1024.877)(1156.634,1079.987)
\thinlines
\path(9492,1196)(9491,1195)(9489,1193)
	(9485,1190)(9478,1184)(9469,1177)
	(9457,1167)(9442,1155)(9425,1141)
	(9404,1125)(9381,1107)(9356,1088)
	(9328,1068)(9299,1047)(9268,1026)
	(9235,1005)(9201,983)(9166,963)
	(9130,943)(9091,923)(9052,905)
	(9010,887)(8967,871)(8921,856)
	(8872,842)(8821,831)(8767,821)
	(8711,813)(8652,809)(8592,807)
	(8532,809)(8473,813)(8417,821)
	(8362,831)(8311,842)(8262,856)
	(8216,871)(8172,887)(8130,905)
	(8090,923)(8052,943)(8015,963)
	(7979,983)(7945,1005)(7912,1026)
	(7880,1047)(7850,1068)(7823,1088)
	(7797,1107)(7773,1125)(7752,1141)
	(7734,1155)(7719,1167)(7707,1177)
	(7698,1184)(7684,1196)
\blacken\thicklines
\path(7810.904,1136.615)(7684.000,1196.000)(7762.095,1079.671)(7810.904,1136.615)
\thinlines
\path(9717,1159)(9716,1158)(9715,1157)
	(9711,1155)(9706,1151)(9699,1145)
	(9689,1137)(9676,1128)(9660,1116)
	(9642,1102)(9620,1086)(9596,1069)
	(9569,1049)(9539,1028)(9507,1005)
	(9472,982)(9436,957)(9397,932)
	(9357,906)(9316,880)(9272,854)
	(9228,828)(9182,802)(9135,777)
	(9087,752)(9038,728)(8987,705)
	(8934,682)(8880,661)(8824,641)
	(8766,622)(8706,604)(8643,587)
	(8578,573)(8511,560)(8441,549)
	(8369,540)(8294,533)(8219,530)
	(8142,529)(8066,532)(7990,537)
	(7917,545)(7846,556)(7777,568)
	(7712,583)(7649,599)(7589,617)
	(7531,637)(7475,657)(7422,679)
	(7370,701)(7321,725)(7273,749)
	(7227,775)(7181,800)(7138,827)
	(7095,854)(7054,881)(7015,908)
	(6976,935)(6939,961)(6904,988)
	(6871,1013)(6839,1038)(6810,1061)
	(6783,1083)(6758,1103)(6736,1121)
	(6717,1137)(6700,1152)(6686,1164)
	(6675,1174)(6666,1181)(6659,1187)(6649,1196)
\blacken\thicklines
\path(6774.431,1133.563)(6649.000,1196.000)(6724.259,1077.816)(6774.431,1133.563)
\put(6319,1380){\makebox(0,0)[lb]{\smash{{\SetFigFont{9}{10.8}{\rmdefault}{\mddefault}{\updefault}$1$}}}}
\put(7377,1387){\makebox(0,0)[lb]{\smash{{\SetFigFont{9}{10.8}{\rmdefault}{\mddefault}{\updefault}$2$}}}}
\put(8486,1380){\makebox(0,0)[lb]{\smash{{\SetFigFont{9}{10.8}{\rmdefault}{\mddefault}{\updefault}$3$}}}}
\put(9589,1388){\makebox(0,0)[lb]{\smash{{\SetFigFont{9}{10.8}{\rmdefault}{\mddefault}{\updefault}$4$}}}}
\put(10706,1388){\makebox(0,0)[lb]{\smash{{\SetFigFont{9}{10.8}{\rmdefault}{\mddefault}{\updefault}$5$}}}}
\put(1835,1361){\makebox(0,0)[lb]{\smash{{\SetFigFont{9}{10.8}{\rmdefault}{\mddefault}{\updefault}$2$}}}}
\put(4099,1382){\makebox(0,0)[lb]{\smash{{\SetFigFont{9}{10.8}{\rmdefault}{\mddefault}{\updefault}$4$}}}}
\put(754,1369){\makebox(0,0)[lb]{\smash{{\SetFigFont{9}{10.8}{\rmdefault}{\mddefault}{\updefault}$1$}}}}
\put(3822,2201){\makebox(0,0)[lb]{\smash{{\SetFigFont{9}{10.8}{\familydefault}{\mddefault}{\updefault}$a,b,d$}}}}
\put(2300,1789){\makebox(0,0)[lb]{\smash{{\SetFigFont{9}{10.8}{\familydefault}{\mddefault}{\updefault}$a,b$}}}}
\put(1790,829){\makebox(0,0)[lb]{\smash{{\SetFigFont{9}{10.8}{\familydefault}{\mddefault}{\updefault}$a$}}}}
\put(596,2179){\makebox(0,0)[lb]{\smash{{\SetFigFont{9}{10.8}{\familydefault}{\mddefault}{\updefault}$b,d$}}}}
\put(1865,2156){\makebox(0,0)[lb]{\smash{{\SetFigFont{9}{10.8}{\familydefault}{\mddefault}{\updefault}$d$}}}}
\put(1295,1774){\makebox(0,0)[lb]{\smash{{\SetFigFont{9}{10.8}{\familydefault}{\mddefault}{\updefault}$a$}}}}
\put(3485,1602){\makebox(0,0)[lb]{\smash{{\SetFigFont{9}{10.8}{\familydefault}{\mddefault}{\updefault}$d$}}}}
\put(2937,1368){\makebox(0,0)[lb]{\smash{{\SetFigFont{9}{10.8}{\rmdefault}{\mddefault}{\updefault}$3$}}}}
\put(2420,1369){\makebox(0,0)[lb]{\smash{{\SetFigFont{9}{10.8}{\familydefault}{\mddefault}{\updefault}$b$}}}}
\put(6754,1827){\makebox(0,0)[lb]{\smash{{\SetFigFont{9}{10.8}{\familydefault}{\mddefault}{\updefault}$b$}}}}
\put(10077,1587){\makebox(0,0)[lb]{\smash{{\SetFigFont{9}{10.8}{\familydefault}{\mddefault}{\updefault}$d$}}}}
\put(7789,1835){\makebox(0,0)[lb]{\smash{{\SetFigFont{9}{10.8}{\familydefault}{\mddefault}{\updefault}$a,b$}}}}
\put(8878,1835){\makebox(0,0)[lb]{\smash{{\SetFigFont{9}{10.8}{\familydefault}{\mddefault}{\updefault}$a,b$}}}}
\put(8488,881){\makebox(0,0)[lb]{\smash{{\SetFigFont{9}{10.8}{\familydefault}{\mddefault}{\updefault}$a$}}}}
\put(7819,109){\makebox(0,0)[lb]{\smash{{\SetFigFont{9}{10.8}{\rmdefault}{\mddefault}{\updefault}$\cR_5(b,a,d)$}}}}
\put(1752,117){\makebox(0,0)[lb]{\smash{{\SetFigFont{9}{10.8}{\rmdefault}{\mddefault}{\updefault}$\cR_4(a,b,d)$}}}}
\put(7602,650){\makebox(0,0)[lb]{\smash{{\SetFigFont{9}{10.8}{\familydefault}{\mddefault}{\updefault}$b$}}}}
\put(7355,2208){\makebox(0,0)[lb]{\smash{{\SetFigFont{9}{10.8}{\familydefault}{\mddefault}{\updefault}$d$}}}}
\put(8436,2224){\makebox(0,0)[lb]{\smash{{\SetFigFont{9}{10.8}{\familydefault}{\mddefault}{\updefault}$d$}}}}
\put(6108,2216){\makebox(0,0)[lb]{\smash{{\SetFigFont{9}{10.8}{\familydefault}{\mddefault}{\updefault}$a,d$}}}}
\thinlines
\put(4211.500,1870.929){\arc{394.717}{2.4948}{6.9299}}
\blacken\thicklines
\path(4371.033,1892.097)(4369.000,1752.000)(4442.998,1870.977)(4371.033,1892.097)
\end{picture}
}
\end{center}
\caption{Right-ideal witnesses for boolean operations.} 
\label{fig:binary}
\end{figure}

 \begin{figure}[h]
 \begin{center}
 \input cross.eepic
 \end{center}
 \caption{Cross-product automaton for boolean operations for $m=4,n=5$.}  
 \label{fig:cross}
 \end{figure}

In our proof that the bounds for boolean operations are reached, we use a result of Bell, Brzozowski, Moreira and Reis~\cite{BBMR13}.
We use the following terminology:
A binary boolean operation $\circ$ on regular languages is a  mapping $\circ:2^{\Sig^*}\times 2^{\Sig^*}\to 2^{\Sig^*}$.
If $L, L'\subseteq \Sig^*$, the result of the operation $\circ$ is denoted by $L\circ L'$.
We say that such a boolean operation is \emph{proper} if $\circ$ is not a constant ($\emp$ or $\Sig^*$)  and not a function of one variable only, that is, it is not the identity or complement function of one of the variables.

Let $S_n$ denote the symmetric group of degree $n$. 
A \emph{basis}~\cite{Pic39} of $S_n$
is an ordered pair $(s,t)$ of distinct transformations of $Q_n=\{1,\dots,n\}$ that generate $S_n$.
Two bases $(s,t)$ and $(s',t')$ of $S_n$ are \emph{conjugate} if there exists a transformation $r\in S_n$ such that $rsr^{-1}=s'$, and  $rtr^{-1}=t'$.
A DFA \emph{has a basis $(s,t)$ for $S_n$} if it has letters $a,b\in \Sig$ such that $a$ induces $s$ and $b$ induces $t$.

\begin{proposition}[Symmetric Groups and Boolean Operations~\cite{BBMR13}]
\label{prop:ssbool}
Suppose that $m,n,\ge1$,  $L_m$ and $L'_n$ are regular languages of complexity $m$ and $n$ respectively, and  $\cD_m$ and $\cD'_n$ are minimal DFAs for $L_m$ and $L'_n$ with $F$ and $F'$ as sets of final states. Suppose  that $D_m$  has a basis $B$ for $S_m$ and $D_n$  has a basis $B'$ for $S_n$.
 Let $\circ$ be a proper binary boolean function. Then the  following hold:
\be
\item
In the direct product $\cD_m \times \cD_n$, all $mn$ states are reachable if and only if $m \ne n$, or $m = n$ and the bases $B$ and $B'$ are not conjugate.
\item
For $m,n\ge 2$, but $(m,n)\not \in \{(2,2), (3,4),(4,3),(4,4)\}$, 
 the language $L_m\circ L_n$ has complexity $mn$ if and only if $m \ne n$, or $m = n$ and the bases $B$ and $B'$ are not conjugate.
\ee
\end{proposition}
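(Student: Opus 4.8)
The plan is to translate the combinatorial reachability question into a purely group-theoretic one, exploiting the bases. Since $\cD_m$ has a basis $B=(s,t)$ for $S_m$ and $\cD_n$ has a basis $B'=(s',t')$ for $S_n$, the two basis letters act on the product state set $Q_m \times Q_n$ through the pairs $(s,s')$ and $(t,t')$. Let $G \le S_m \times S_n$ be the subgroup they generate. Because $s,t$ generate $S_m$ and $s',t'$ generate $S_n$, the group $G$ is a \emph{subdirect} product: both coordinate projections are onto. The set of states of $\cD_m \times \cD_n$ reachable using the basis letters is exactly the orbit of the initial state $(q_1,q_1')$ under $G$. Hence statement~1 reduces to the assertion: \emph{$G$ acts transitively on $Q_m \times Q_n$ if and only if $m \ne n$, or $m=n$ and $B,B'$ are not conjugate.}

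First I would classify the subdirect products $G$ via Goursat's lemma, which puts them in bijection with isomorphisms $\phi\colon S_m/N \to S_n/M$ between quotients by normal subgroups $N \trianglelefteq S_m$, $M \trianglelefteq S_n$, through $G = \{(g,h) \mid \phi(gN)=hM\}$. For $m \ge 5$ the only normal subgroups of $S_m$ are $\{e\}, A_m, S_m$, so the admissible common quotients are the trivial group, $\mathbb{Z}/2$ (the sign), and $S_m$ itself. I would then test transitivity case by case. If $N=S_m$ and $M=S_n$, then $G$ is the full direct product and is trivially transitive. If $N=A_m$, $M=A_n$ with $\phi$ matching signs, then to send $(q_1,q_1')$ to an arbitrary $(i,j)$ one chooses $g$ with $g\cdot q_1=i$ and $h$ with $h\cdot q_1'=j$ of equal sign; since each point-stabilizer coset contains permutations of both parities once $m,n\ge 3$, this is always possible, so $G$ is still transitive. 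Transitivity fails only in the diagonal case $N=M=\{e\}$, where $G$ is the graph $\{(g,\phi(g))\}$ of an isomorphism $\phi\colon S_m\to S_n$; this forces $m=n$, and the second coordinate of the orbit over a fixed first coordinate ranges only over $\phi(\mathrm{Stab}(q_1))\cdot q_1'$, which is all of $Q_m$ precisely when $\phi(\mathrm{Stab}(q_1))$ is transitive, i.e.\ when $\phi$ is \emph{not} inner. Since the assignment $s\mapsto s',\,t\mapsto t'$ extends to an inner automorphism exactly when $B,B'$ are conjugate, and (outside the small exceptional range) every automorphism of $S_m$ is inner, this graph is intransitive precisely when the bases are conjugate, yielding statement~1.

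For statement~2 I would take the $mn$ reachable states guaranteed above and argue pairwise distinguishability under the final-state set $F$ determined by the proper operation $\circ$. Writing $\circ$ through its membership pattern on the four cases accept/accept, accept/reject, reject/accept, reject/reject, properness guarantees that the induced $F$ is neither empty nor full within each row and each column of the product. Given two distinct reachable states $(i,j)\ne(i',j')$, I would first handle $i\ne i'$ by using basis words (which realize all of $S_m$ on the first coordinate) to move the pair into a configuration where exactly one lies in $F$, and symmetrically for $j\ne j'$, while checking that the simultaneous action through $G$ does not obstruct the separation. The listed exceptions $(2,2),(3,4),(4,3),(4,4)$ are exactly the pairs where the abnormal normal-subgroup structure of $S_2,S_3,S_4$ (for instance the Klein four-group $V\trianglelefteq S_4$ with $S_4/V\cong S_3$, or $A_3\cong\mathbb{Z}/3$) introduces extra subdirect products or coincidences that defeat distinguishability; I would excise and verify these by direct computation.

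The main obstacle will be the distinguishability step of statement~2 carried out uniformly over all sixteen proper boolean operations at once: reachability is a clean orbit computation, but ruling out collapse of product states requires combining properness of $\circ$ with the precise $G$-orbit structure, and it is here that the exceptional small pairs genuinely fail. A secondary subtlety is the outer automorphism of $S_6$, under which a point stabilizer maps to a transitive subgroup; I would confirm this does not break the ``inner $\Leftrightarrow$ conjugate'' equivalence for $m=6$, since conjugacy of bases is defined through an element $r\in S_m$ and therefore tracks inner automorphisms only, so the outer (hence non-conjugate) case correctly lands on the transitive side.
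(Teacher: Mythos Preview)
The paper does not prove this proposition: it is quoted verbatim from~\cite{BBMR13} and used as a black box in the proofs of Theorems~\ref{thm:bool2} and the following theorem. There is therefore no ``paper's own proof'' to compare your attempt against.

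On the substance of your sketch: your treatment of part~1 via Goursat's lemma is the natural line and is essentially what the cited paper does. The case analysis (full product, sign-matching, diagonal) is correct for $m,n\ge 5$, and your remark about the outer automorphism of $S_6$ is the right way to reconcile the diagonal case with the ``conjugate bases'' formulation. You do still owe the small cases $m\le 4$ or $n\le 4$ for part~1, not only part~2: the extra normal subgroups of $S_3$ and $S_4$ create additional Goursat fibres whose transitivity must be checked directly.

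Your part~2 sketch, however, contains a genuine error. You assert that properness of $\circ$ forces the product's final-state set to be ``neither empty nor full within each row and each column.'' This is false already for intersection: the final states of $\cD_m\times\cD_n$ are $F\times F'$, so every row indexed by $i\notin F$ has \emph{no} final states at all. Distinguishability in the product therefore cannot be argued row-by-row from properness alone; one must use words that move states \emph{into} $F$ on the first coordinate before separating on the second, and it is exactly here that the interaction with the subdirect group $G$ becomes delicate. The actual argument in~\cite{BBMR13} handles this by reducing to distinguishability with respect to an arbitrary nontrivial subset (as the present paper exploits just after the proposition statement), and the exceptional pairs $(2,2),(3,4),(4,3),(4,4)$ arise from that reduction rather than from reachability.
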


Since the transition semigroup of $\cR_n$ has a basis for the symmetric group $S_{n-1}$, it contains all permutations of the set of non-final states $\{1,2,\dotsb,n-1\}$. This implies that,  in the direct product $\cR_{m,n}$, all states in the set $S = \{ (i,j) \mid 1 \le i \le m-1, 1 \le j \le n-1\}$ are reachable by words in $\{a,b\}^*$. 
Furthermore, 
if $m,n \ge 3$ and $(m,n) \not\in \{(3,3),(4,5),(5,4),(5,5)\}$, then every pair of states in $S$ is distinguishable with respect to $F\circ F'$. 
\newpage

\begin{theorem}[Boolean Operations]
\label{thm:bool2}
If $m,n\ge 3$, then
\be
\item
The complexity of $R_m(a,b,d) \cap R_n(b,a,d)$ is $mn$.
\item
The complexity of $R_m(a,b,d) \oplus R_n(b,a,d)$ is $mn$.
\item
The complexity of $R_m(a,b,d) \setminus R_n(b,a,d)$ is $mn-(m-1)$.
\item
The complexity of $R_m(a,b,d) \cup R_n(b,a,d)$ is $mn-(m+n-2)$.
\ee
\end{theorem}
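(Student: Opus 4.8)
The plan is to analyze the direct product $\cR_{m,n} = \cR_m(a,b,d) \times \cR_n(b,a,d)$ over the alphabet $\{a,b,d\}$, whose state set $Q_m \times Q_n$ has $mn$ states, and to realize each of the four operations by choosing the appropriate set of final states. Writing $M_i$ for the language of state $i$ in $\cR_m$ and $N_j$ for the language of state $j$ in $\cR_n$, the language of the product state $(i,j)$ under a boolean operation $\circ$ is $M_i \circ N_j$. Since $m$ and $n$ are the unique, absorbing final states, $M_m = N_n = \Sig^*$, while the remaining $M_i$ ($i<m$) and $N_j$ ($j<n$) are pairwise distinct proper right ideals that do not contain $\eps$, and the two families $\{M_i\}_{i<m}$ and $\{N_j\}_{j<n}$ are disjoint. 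The product gives the upper bound $mn$ outright, so for each operation it suffices to count how many of the $mn$ states are reachable and how many distinct languages they realize; the complexity is the number of reachable, pairwise inequivalent states.

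First I would prove that all $mn$ states are reachable. The core $S = \{(i,j) \mid i \le m-1, j \le n-1\}$ is reachable by words in $\{a,b\}^*$: this is the statement established just before the theorem, obtained by applying Proposition~\ref{prop:ssbool}(1) with parameters $m-1$ and $n-1$ to the bases of $S_{m-1}$ and $S_{n-1}$ carried by $\cR_m(a,b,d)$ and $\cR_n(b,a,d)$ (these bases are non-conjugate because of the interchange of $a$ and $b$, so reachability holds even when $m=n$). To reach the remaining states I would use $d$, which induces $(m-1 \to m)$ in the first component and $(n-1 \to n)$ in the second: from $(m-1,j)$ with $j \ne n-1$ it reaches $(m,j)$, from $(i,n-1)$ with $i \ne m-1$ it reaches $(i,n)$, and from $(m-1,n-1)$ it reaches $(m,n)$. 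Since $m$ and $n$ are absorbing, once a coordinate is final it is frozen, and $\{a,b\}^*$ acts as $S_{m-1}$ (resp.\ $S_{n-1}$) on the free coordinate by Proposition~\ref{prop:piccard2} and Remark~\ref{rem:transposition}; hence from these seed states every $(m,j)$, every $(i,n)$, and $(m,n)$ is reached. Thus all $mn$ states are reachable for $m,n \ge 3$.

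The heart of the proof is counting equivalence classes for each operation via the description $L_{(i,j)} = M_i \circ N_j$. For states of the core $S$, pairwise distinguishability with respect to the relevant $F \circ F'$ is exactly the second fact established before the theorem (Proposition~\ref{prop:ssbool}(2) with parameters $m-1,n-1$), valid for all $(m,n) \notin \{(3,3),(4,5),(5,4),(5,5)\}$. For a non-core state one coordinate is frozen at $m$ or $n$, so its language is a boolean combination with $\Sig^*$: under $\cup$ every state with $i=m$ or $j=n$ has language $\Sig^*$, so these $m+n-1$ states collapse to one class, while no core language contains $\eps$ and hence none equals $\Sig^*$; under $\setminus$ every state $(i,n)$ has language $M_i \setminus \Sig^* = \emp$, so these $m$ states collapse to the single empty class while the other $mn-m$ states are non-empty; under $\cap$ only $(m,n)$ has the universal language and (since every state reaches $(m,n)$) none is empty; and under $\oplus$ only $(m,n)$ has the empty language $\Sig^* \oplus \Sig^* = \emp$, no core state being empty because $M_i \ne N_j$. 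The states that do not collapse are pairwise distinguishable: for $(m,j)$ versus $(m,j')$ one uses a word distinguishing $j,j'$ in the minimal DFA $\cR_n$, symmetrically for $(i,n)$ versus $(i',n)$; for core-versus-non-core and $(m,j)$-versus-$(i,n)$ one exhibits a distinguishing word by driving the free coordinate with $\{a,b\}^*$ and then $d$, using the right-ideal property $M_i\Sig^* = M_i$ so that a component, once driven into its final state, stays there. Combining the counts yields $mn$ for $\cap$ and $\oplus$, $(mn-m)+1 = mn-(m-1)$ for $\setminus$, and $(m-1)(n-1)+1 = mn-(m+n-2)$ for $\cup$.

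The main obstacle is distinguishability rather than reachability, and within it the genuinely hard part---pairwise distinguishability of the core $S$, whose two coordinates are coupled through the shared letters---is precisely what Proposition~\ref{prop:ssbool}(2) supplies. The remaining difficulty is twofold. First, the explicit distinguishing words for the non-core and cross-type pairs must be arranged so that advancing one coordinate to its absorbing final state via $d$ does not destroy the distinction in the other coordinate; this is where the right-ideal property is used essentially, and it also secures the disjointness $M_i \ne N_j$ needed for the $\cap$ and $\oplus$ counts. Second, the four exceptional pairs $(m,n) \in \{(3,3),(4,5),(5,4),(5,5)\}$ excluded from Proposition~\ref{prop:ssbool}(2) require the core distinguishability, and hence the stated complexities, to be verified directly; reachability is unaffected in these cases since Proposition~\ref{prop:ssbool}(1) has no exceptions, so only the finitely many distinguishability checks remain, and I would discharge them by direct computation.
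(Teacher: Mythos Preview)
Your proposal is correct and follows essentially the same route as the paper: use Proposition~\ref{prop:ssbool} (applied with parameters $m-1,n-1$, the bases being non-conjugate because the cycle lengths differ) to get reachability and pairwise distinguishability of the core $S$, push into the last row and column with $d$ and move along them with $\{a,b\}^*$, count the collapses in $H\cup V$ for each of the four operations, and discharge the exceptional pairs $\{(3,3),(4,5),(5,4),(5,5)\}$ by direct computation. The only notable difference is one of presentation: you argue semantically via the languages $M_i\circ N_j$, whereas the paper stays syntactic and, in particular, makes explicit the small ``$wd$ trick'' that lifts core distinguishability from ``with respect to $\{(m-1,n-1)\}$'' (which is what Proposition~\ref{prop:ssbool} actually gives) to ``with respect to $H\circ V$''; your assertion that the families $\{M_i\}_{i<m}$ and $\{N_j\}_{j<n}$ are disjoint is not needed and is better viewed as a consequence of the distinguishability you establish rather than an input to it.
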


\begin{proof}
In the cases where $(m,n) \in \{(3,3),(4,5),(5,4),(5,5)\}$, we cannot apply Proposition \ref{prop:ssbool}, but it is easy to verify computationally that the bounds are met. For the remainder of the proof we assume $(m,n) \not\in \{(3,3),(4,5),(5,4),(5,5)\}$.

Our first task is to show that all $mn$ states of $\cR_{m,n}$ are reachable.
By Proposition \ref{prop:ssbool}, all states in the set $S = \{(i,j) \mid 1 \le i \le m-1, 1 \le j \le n-1\}$ are reachable. The remaining states are the ones in the last row or last column (that is, row $m$ or column $n$) of the direct product.

For $1 \le j \le n-2$, from state $(m-1,j)$ we can reach $(m,j)$ by $d$. From state $(m,n-2)$ we can reach $(m,n-1)$ by $a$. From state $(m-1,n-1)$ we can reach $(m,n)$ by $d$. Hence all states in row $m$ are reachable.

For $1 \le i \le m-2$, from state $(i,n-1)$ we can reach $(i,n)$ by $d$. From state $(m-2,n)$ we can reach $(m-1,n)$ by $a$. Hence all states in column $n$ are reachable, and thus all $mn$ states are reachable.

We now count the number of distinguishable states for each operation. Let $H = \{(m,j) \mid 1 \le j \le n\}$ be the set of states in the last row and let $V = \{(i,n) \mid 1 \le i \le m\}$ be the set of states in the last column. If $\circ \in \{\cap,\oplus,\setminus,\cup\}$, then $R_m(a,b,d) \circ R_n(b,a,d)$ is recognized by $\cR_{m,n}$, where the set of final states is taken to be $H \circ V$.

By Proposition \ref{prop:ssbool}, all states of $\cR_{m,n}$ that lie in $S$ are distinguishable with respect to any non-empty strict subset of $S$. We claim that they are also distinguishable with respect to $H \circ V$ for $\circ \in \{\cap,\oplus,\setminus,\cup\}$.

To see this, let $H' = \{(m-1,j) \mid 1 \le j \le n-1\}$ and let $V' = \{(i,n-1) \mid 1 \le i \le m-1\}$. Then by Proposition \ref{prop:ssbool}, all states in $S$ are distinguishable with respect to $H' \cap V' = \{(m-1,n-1)\}$. This implies that for all pairs of states $(i,j),(k,\ell) \in S$, there exists a word $w$ that sends $(i,j)$ to $(m-1,n-1)$ and sends $(k,\ell)$ to some other state in $S$. It follows that the word $wd$ sends $(i,j)$ to $(m,n)$ (which is in $H \cap V$), while $(k,\ell)$ is sent to a state outside of $H \cap V$. Hence all states in $S$ are distinguishable with respect to $H \cap V$. The same argument works for $H \oplus V$, $H \setminus V$, and $H \cup V$.

Thus for each boolean operation $\circ$, all $(m-1)(n-1) = mn - m - n + 1$ states in $S$ are distinguishable with respect to the final state set $H \circ V$. To show that the complexity bounds are reached by $R_m(a,b,d) \circ R_n(b,a,d)$, it suffices to consider how many of the $m + n - 1$ states in $H \cup V$ are distinguishable with respect to $H \circ V$.

\noin\textbf{Intersection:}
Here the set of final states is $H \cap V = \{(m,n)\}$.
State $(m,n)$ is the only final state and hence is distinguishable from all the other states.
Any two states in $H$ ($V$) are distinguished by words in $b^*d$ ($a^*d$).
State $(m,1)$ accepts $b^{n-2}d$, while $(1,n)$ rejects it.
For $2\le i\le n-1$, $(m,i)$ is sent to $(m,1)$ by $b^{n-1-i}$, while 
state $(1,n)$ is not changed by that word.
Hence $(m,i)$ is distinguishable from $(1,n)$.
By a symmetric argument, $(j,n)$ is distinguishable from $(m,1)$ for $2 \le j \le m-1$.
For $2\le i\le n-1$ and $2\le j\le m-1$, $(m,i)$ is distinguished from
$(j,n)$ because $b^{n-i}$ sends the former to $(m,1)$ and the latter to a state of the form $(k,n)$, where $2\le k\le m-1$.
Hence all pairs of states from $H \cup V$ are distinguishable. There are $m + n - 1$ states in $H \cup V$, so it follows there are $(mn - m - n + 1) + (m + n - 1) = mn$ distinguishable states.

\noin\textbf{Symmetric Difference:}
Here the set of final states is $H \oplus V$, that is, all states in the last row and column except $(m,n)$, which is the only empty state.
This situation is complementary to that for intersection. Thus every two states from $H \cup V$ are distinguishable by the same word as for intersection.
Hence there are $mn$ distinguishable states.

\noin\textbf{Difference:}
Here the set of final states is $H \setminus V$, that is, all states in the last row $H$ except $(m,n)$, which is empty. All other states in the last column $V$ are also empty. 
The $m$ empty states in $V$ are all equivalent, and the $n-1$ final states in $H \setminus V$ are distinguished in the same way as for intersection. Hence there are $(n-1)+1 = n$ distinguishable states in $H \setminus V$. It follows there are $(mn - m - n + 1) + n = mn - (m-1)$ distinguishable states.

\noin\textbf{Union:}
Here the set of final states is $H \cup V$. From a state in $H \cup V$ it is only possible to reach other states in $H \cup V$, and all these states are final, so every state in $H \cup V$ accepts $\Sig^*$. Thus all the states in $H \cup V$ are equivalent, and thus there are $(mn - m - n + 1) + 1 = mn - (m + n - 2)$ distinguishable states.
\qed
\end{proof}

Although it is impossible for the stream $(R_n(a,b,d) \mid n\ge 3)$ to meet the bound for boolean operations when $m=n$, this stream is as complex as it could possibly be
in view of the following:
\begin{theorem}[Boolean Operations, $m\neq n$]
\noin
Suppose $m,n \ge 3$ and $m \ne n$.
\be
\item
The complexity of $R_m(a,b,d) \cup R_n(a,b,d)$ is $mn-(m+n-2)$.
\item
The complexity of $R_m(a,b,d) \cap R_n(a,b,d)$ is $mn$.
\item
The complexity of $R_m(a,b,d) \setminus R_n(a,b,d)$ is $mn-(m-1)$.
\item
The complexity of $R_m(a,b,d) \oplus R_n(a,b,d)$ is $mn$.
\ee
\end{theorem}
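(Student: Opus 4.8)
The plan is to mirror the proof of Theorem~\ref{thm:bool2} almost verbatim, replacing the second operand $R_n(b,a,d)$ by $R_n(a,b,d)$ and replacing the non-conjugacy hypothesis by the hypothesis $m\ne n$. Concretely, I would form the direct product $\cR_{m,n}=\cR_m(a,b,d)\times\cR_n(a,b,d)$, write its state set as the disjoint union of the core $S=\{(i,j)\mid 1\le i\le m-1,\ 1\le j\le n-1\}$, the last row $H=\{(m,j)\mid 1\le j\le n\}$, and the last column $V=\{(i,n)\mid 1\le i\le m\}$, and then, for each $\circ\in\{\cap,\oplus,\setminus,\cup\}$, take the final set to be $H\circ V$ and count reachable, pairwise distinguishable states. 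The target counts $mn$, $mn$, $mn-(m-1)$ and $mn-(m+n-2)$ are exactly those of Theorem~\ref{thm:bool2}, which are the maximal values for right ideals, so it suffices to exhibit the matching lower bounds.

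First I would establish reachability. Restricting each factor to its non-final states, letter $a$ induces the $(m-1)$-cycle (respectively the $(n-1)$-cycle) and, by Remark~\ref{rem:transposition} together with Proposition~\ref{prop:piccard}, $\{a,b\}$ generates the full symmetric group $S_{m-1}$ (respectively $S_{n-1}$). Hence the letters of $\cR_{m,n}$ give a basis for $S_{m-1}$ in the first coordinate and for $S_{n-1}$ in the second. Since $m\ne n$ forces $m-1\ne n-1$, Proposition~\ref{prop:ssbool}(1) applies under the degree shift $m\mapsto m-1$, $n\mapsto n-1$ and shows that every state of $S$ is reachable from $(1,1)$ by words in $\{a,b\}^*$; equivalently, the diagonal permutation group $G$ generated by the letter actions is transitive on $S$. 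The boundary states in $H$ and $V$ are then reached from $S$ using $a$ and $d$ exactly as in Theorem~\ref{thm:bool2} (the letter $d$ acts identically in both versions, and $a$ is a full cycle on each coordinate), so all $mn$ states are reachable; note that part (1) has no exceptional pairs, so this holds for every $m\ne n$.

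Next I would handle distinguishability. Within $S$, Proposition~\ref{prop:ssbool}(2) (again with the degree shift, and using $m\ne n$) shows that all states of $S$ are distinguishable with respect to the single core state $\{(m-1,n-1)\}$; appending $d$ sends this state to the corner $(m,n)$, which lies in $H\circ V$ for each of the four operations, so all core states are distinguishable with respect to $H\circ V$ by the same reduction as in Theorem~\ref{thm:bool2}. For the boundary, any two states of $H$ (respectively $V$) are separated by a word $a^{t}d$ acting on the free coordinate while the sink coordinate is fixed by every permutation and by $d$; and a state $(m,i)\in H$ is separated from a state $(j,n)\in V$ by invoking transitivity of $G$ on $S$: choose $g\in G$ sending the second coordinate $i$ to $n-1$ and the first coordinate $j$ to some value other than $m-1$, realize $g$ by a word $w\in\{a,b\}^*$, and observe that $wd$ drives $(m,i)$ into the corner while keeping $(j,n)$ out of it. The per-operation bookkeeping over $H\cup V$ is then unchanged from Theorem~\ref{thm:bool2}: all $m+n-1$ boundary states are distinct for $\cap$ and $\oplus$, the $m$ empty states of $V$ collapse to one class for $\setminus$, and all of $H\cup V$ collapses to one class for $\cup$, yielding the four stated complexities. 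Finally, the exceptional pairs of Proposition~\ref{prop:ssbool}(2) become $(m,n)\in\{(3,3),(4,5),(5,4),(5,5)\}$ after the degree shift; discarding the two with $m=n$, I would verify $(m,n)\in\{(4,5),(5,4)\}$ computationally.

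The main obstacle is the boundary cross-term distinguishability. In Theorem~\ref{thm:bool2} that step exploited the swap: there the letter $b$ was a full cycle on one coordinate while fixing state $1$ on the other, giving enough independence to route one coordinate to a sink while freezing the other. In the unswapped product this asymmetry is gone, since $a$ is a full cycle on both coordinates and $b$ fixes $1$ on both, so the explicit words used in Theorem~\ref{thm:bool2} no longer separate an $H$-state from a $V$-state directly. The resolution is to replace those ad hoc words by the transitivity argument above, which depends only on $G$ acting transitively on $S$ and therefore works uniformly whenever $m\ne n$. The only other point requiring care is the degree shift when applying Proposition~\ref{prop:ssbool}, which is precisely what turns its exceptional set into $\{(3,3),(4,5),(5,4),(5,5)\}$ and leaves exactly $(4,5)$ and $(5,4)$ to be checked by hand.
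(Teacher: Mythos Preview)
Your plan coincides with the paper's proof: set aside $(m,n)\in\{(4,5),(5,4)\}$ for direct computation, apply Proposition~\ref{prop:ssbool} with the degree shift $m\mapsto m-1$, $n\mapsto n-1$ to obtain reachability and core distinguishability on $S$, reach the last row and column exactly as in Theorem~\ref{thm:bool2}, and then do the per-operation boundary bookkeeping. The one place you genuinely diverge is in separating an $H$-state $(m,i)$ from a $V$-state $(j,n)$: the paper constructs explicit distinguishing words (first $a^{m-1}$ to move $(m,1)$ to $(m,m)$ while fixing $(1,n)$, then powers of $b$, then a reduction via $a^{m-j}$, with a symmetric argument when $m>n$), whereas you appeal directly to transitivity of the diagonal group on $S$ to route the live coordinates to $(1,n-1)$ before applying $d$. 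Your version is cleaner and avoids the $m<n$ versus $m>n$ case split.

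One slip to correct: the corner $(m,n)$ does \emph{not} lie in $H\circ V$ when $\circ\in\{\oplus,\setminus\}$, and for $\cup$ the other image $(k,\ell)\cdot wd$ may land in $H\cup V$, so your single-target reduction through $\{(m-1,n-1)\}$ only directly settles intersection. The uniform fix---implicit in Theorem~\ref{thm:bool2}'s phrase ``the same argument works''---is to invoke Proposition~\ref{prop:ssbool}(2) inside $S$ with final set $H'\circ V'$ for the given $\circ$, obtain $w\in\{a,b\}^*$ separating the pair with respect to $H'\circ V'$, and then append $d$; this carries $H'\circ V'$ into $H\circ V$ and the $S$-complement of $H'\circ V'$ outside $H\circ V$, so $wd$ distinguishes with respect to $H\circ V$ for each of the four operations.
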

\begin{proof}
Let $\cR_m = \cR_m(a,b,d)$, $\cR_n = \cR_n(a,b,d)$, and $\cR_{m,n} = \cR_m \times \cR_n$ be the direct product automaton. If $(m,n) \in \{(4,5),(5,4)\}$, one can verify computationally that the bounds are met. If $(m,n) \not \in \{(4,5),(5,4)\}$,  we can apply Proposition \ref{prop:ssbool}. Thus by the arguments used in the proof of Theorem \ref{thm:bool2}, all states of $\cR_{m,n}$ are reachable. 
Furthermore, if $H = \{(m,j) \mid 1 \le j \le n\}$ and $V = \{(i,n) \mid 1 \le i \le m\}$, then all states in $S = \{(i,j) \mid 1 \le i \le m-1, 1 \le j \le n-1\}$ are distinguishable with respect to $H \circ V$ for each $\circ \in \{\cap,\oplus,\setminus,\cup\}$.
To determine the number of distinguishable states for each boolean operation $\circ$, it suffices to count the number of states in $H \cup V$ that are distinguishable with respect to $H \circ V$.

\noin\textbf{Intersection:}
Here the set of final states is $H \cap V = \{(m,n)\}$. Since $(m,n)$ is the only final state, it is distinguishable from all other states.
Any two states both in $H$ (or both in $V$) are distinguished by words in $a^*d$.
Suppose $m < n$. Then $a^{m-1}$ sends $(m,1)$ to $(m,m)$ and fixes $(1,n)$.
Words in $b^*$ can send $(m,m)$ to $(m,i)$ for $2 \le i \le n-1$, and they fix $(1,n)$.
For $2 \le i \le n-1$, $(m,i)$ accepts $b^{n-1-i}d$, while $(1,n)$ remains fixed.
Hence $(m,i)$ is distinguishable from $(1,n)$ for all $i$. 
For $2 \le i \le m-1$ and $2 \le j \le n-1$, $(m,i)$ is distinguished from $(j,n)$ because $a^{m-j}$ sends $(j,n)$ to $(1,n)$ and $(m,i)$ to some state that is distinguishable from $(1,n)$.
Hence all pairs of states from $H \cup V$ are distinguishable if $m < n$. A symmetric argument works for $m > n$. Thus all $mn$ states are distinguishable.

\noin\textbf{Symmetric Difference, Difference, and Union:}
The same arguments used in the proof of Theorem \ref{thm:bool2} work here. 
\qed
\end{proof}

\section{Product}
\label{sec:product}
We show that the complexity of the product of $R_m(a,b,d)$ with $R_n(a,b,d)$ 
reaches the maximum possible bound derived in~\cite{BJL13}.
To avoid confusing states of the two DFAs, we label their states differently.
Let $\cR_m=\cR_m(a,b,d)=(Q'_m, \Sig,\delta',q_1,\{ q_{m} \})$, where 
$Q'_m=\{q_1,\ldots,q_{m}\}$, and 
let $\cR_n=\cR_n(a,b,d)$, as in Definition~\ref{def:mostcomplex}.
Define the $\eps$-NFA $\cP=(Q'_m\cup Q_n,\Sig, \delta_\cP,\{q_1\},\{n\})$, 
where $\delta_\cP(q,a)=\{ \delta'(q,a)\}$ if $q\in Q'_m$, $a\in\Sig$,
$\delta_\cP(q,a)=\{ \delta(q,a)\}$ if $q\in Q_n$, $a\in \Sig$, 
and $\delta_\cP(q_{m},\eps)=\{1\}$.
This $\eps$-NFA accepts $R_mR_n$, and is illustrated in Figure~\ref{fig:product}.

\begin{figure}[hbt]
\begin{center}
\setlength{\unitlength}{0.00039370in}
\begingroup\makeatletter\ifx\SetFigFont\undefined%
\gdef\SetFigFont#1#2#3#4#5{%
  \reset@font\fontsize{#1}{#2pt}%
  \fontfamily{#3}\fontseries{#4}\fontshape{#5}%
  \selectfont}%
\fi\endgroup%
{\renewcommand{\dashlinestretch}{30}
\begin{picture}(11110,2486)(0,-10)
\put(2936,1391){\makebox(0,0)[lb]{\smash{{\SetFigFont{9}{10.8}{\rmdefault}{\mddefault}{\updefault}$q_3$}}}}
\put(8554.500,1870.929){\arc{394.717}{2.4948}{6.9299}}
\blacken\thicklines
\path(8714.033,1892.097)(8712.000,1752.000)(8785.998,1870.977)(8714.033,1892.097)
\thinlines
\put(10797.500,1870.929){\arc{394.717}{2.4948}{6.9299}}
\blacken\thicklines
\path(10957.033,1892.097)(10955.000,1752.000)(11028.998,1870.977)(10957.033,1892.097)
\thinlines
\put(851.500,1862.929){\arc{394.717}{2.4948}{6.9299}}
\blacken\thicklines
\path(1011.033,1884.097)(1009.000,1744.000)(1082.998,1862.977)(1011.033,1884.097)
\thinlines
\put(1931.500,1847.929){\arc{394.717}{2.4948}{6.9299}}
\blacken\thicklines
\path(2091.033,1869.097)(2089.000,1729.000)(2162.998,1847.977)(2091.033,1869.097)
\thinlines
\put(7466.500,1885.929){\arc{394.717}{2.4948}{6.9299}}
\blacken\thicklines
\path(7626.033,1907.097)(7624.000,1767.000)(7697.998,1885.977)(7626.033,1907.097)
\thinlines
\put(6401.500,1870.929){\arc{394.717}{2.4948}{6.9299}}
\blacken\thicklines
\path(6561.033,1892.097)(6559.000,1752.000)(6632.998,1870.977)(6561.033,1892.097)
\thinlines
\put(8552,1460){\ellipse{630}{630}}
\put(9673,1467){\ellipse{630}{630}}
\put(6402,1452){\ellipse{630}{630}}
\put(7459,1470){\ellipse{630}{630}}
\put(10787,1459){\ellipse{630}{630}}
\put(10788,1457){\ellipse{540}{540}}
\put(847,1435){\ellipse{630}{630}}
\put(1930,1430){\ellipse{630}{630}}
\put(4207,1449){\ellipse{630}{630}}
\put(3065,1466){\ellipse{630}{630}}
\path(2172,1654)(2802,1654)
\blacken\thicklines
\path(2667.000,1616.500)(2802.000,1654.000)(2667.000,1691.500)(2667.000,1616.500)
\thinlines
\path(1047,1661)(1677,1661)
\blacken\thicklines
\path(1542.000,1623.500)(1677.000,1661.000)(1542.000,1698.500)(1542.000,1623.500)
\thinlines
\path(7721,1685)(8306,1685)
\blacken\thicklines
\path(8171.000,1647.500)(8306.000,1685.000)(8171.000,1722.500)(8171.000,1647.500)
\thinlines
\path(8817,1677)(9402,1677)
\blacken\thicklines
\path(9267.000,1639.500)(9402.000,1677.000)(9267.000,1714.500)(9267.000,1639.500)
\thinlines
\path(6627,1692)(7212,1692)
\blacken\thicklines
\path(7077.000,1654.500)(7212.000,1692.000)(7077.000,1729.500)(7077.000,1654.500)
\thinlines
\path(12,1452)(499,1452)
\blacken\thicklines
\path(379.000,1422.000)(499.000,1452.000)(379.000,1482.000)(379.000,1422.000)
\thinlines
\path(3372,1459)(3867,1459)
\blacken\thicklines
\path(3732.000,1421.500)(3867.000,1459.000)(3732.000,1496.500)(3732.000,1421.500)
\thinlines
\path(2810,1279)(2225,1279)
\blacken\thicklines
\path(2360.000,1316.500)(2225.000,1279.000)(2360.000,1241.500)(2360.000,1316.500)
\thinlines
\path(10002,1467)(10497,1467)
\blacken\thicklines
\path(10362.000,1429.500)(10497.000,1467.000)(10362.000,1504.500)(10362.000,1429.500)
\thinlines
\path(4519,1467)(6042,1467)
\blacken\thicklines
\path(5907.000,1429.500)(6042.000,1467.000)(5907.000,1504.500)(5907.000,1429.500)
\thinlines
\path(2922,1189)(2921,1188)(2919,1186)
	(2915,1183)(2909,1178)(2900,1171)
	(2888,1162)(2874,1150)(2857,1136)
	(2836,1120)(2813,1102)(2788,1083)
	(2760,1063)(2730,1041)(2699,1019)
	(2666,996)(2632,974)(2596,951)
	(2559,929)(2521,907)(2482,886)
	(2442,866)(2400,846)(2356,828)
	(2310,811)(2262,795)(2213,780)
	(2160,767)(2106,757)(2049,748)
	(1991,742)(1932,739)(1869,739)
	(1809,743)(1750,750)(1695,759)
	(1643,771)(1594,785)(1547,800)
	(1503,817)(1462,835)(1422,854)
	(1384,875)(1348,896)(1314,918)
	(1280,940)(1249,963)(1218,985)
	(1190,1007)(1163,1029)(1139,1049)
	(1116,1068)(1097,1085)(1079,1100)
	(1065,1113)(1054,1124)(1045,1132)(1032,1144)
\blacken\thicklines
\path(1156.634,1079.987)(1032.000,1144.000)(1105.763,1024.877)(1156.634,1079.987)
\thinlines
\path(9717,1159)(9716,1158)(9715,1157)
	(9711,1155)(9706,1151)(9699,1145)
	(9689,1137)(9676,1128)(9660,1116)
	(9642,1102)(9620,1086)(9596,1069)
	(9569,1049)(9539,1028)(9507,1005)
	(9472,982)(9436,957)(9397,932)
	(9357,906)(9316,880)(9272,854)
	(9228,828)(9182,802)(9135,777)
	(9087,752)(9038,728)(8987,705)
	(8934,682)(8880,661)(8824,641)
	(8766,622)(8706,604)(8643,587)
	(8578,573)(8511,560)(8441,549)
	(8369,540)(8294,533)(8219,530)
	(8142,529)(8066,532)(7990,537)
	(7917,545)(7846,556)(7777,568)
	(7712,583)(7649,599)(7589,617)
	(7531,637)(7475,657)(7422,679)
	(7370,701)(7321,725)(7273,749)
	(7227,775)(7181,800)(7138,827)
	(7095,854)(7054,881)(7015,908)
	(6976,935)(6939,961)(6904,988)
	(6871,1013)(6839,1038)(6810,1061)
	(6783,1083)(6758,1103)(6736,1121)
	(6717,1137)(6700,1152)(6686,1164)
	(6675,1174)(6666,1181)(6659,1187)(6649,1196)
\blacken\thicklines
\path(6774.431,1133.563)(6649.000,1196.000)(6724.259,1077.816)(6774.431,1133.563)
\thinlines
\path(9492,1196)(9491,1195)(9489,1193)
	(9485,1190)(9478,1184)(9469,1177)
	(9457,1167)(9442,1155)(9425,1141)
	(9404,1125)(9381,1107)(9356,1088)
	(9328,1068)(9299,1047)(9268,1026)
	(9235,1005)(9201,983)(9166,963)
	(9130,943)(9091,923)(9052,905)
	(9010,887)(8967,871)(8921,856)
	(8872,842)(8821,831)(8767,821)
	(8711,813)(8652,809)(8592,807)
	(8532,809)(8473,813)(8417,821)
	(8362,831)(8311,842)(8262,856)
	(8216,871)(8172,887)(8130,905)
	(8090,923)(8052,943)(8015,963)
	(7979,983)(7945,1005)(7912,1026)
	(7880,1047)(7850,1068)(7823,1088)
	(7797,1107)(7773,1125)(7752,1141)
	(7734,1155)(7719,1167)(7707,1177)
	(7698,1184)(7684,1196)
\blacken\thicklines
\path(7810.904,1136.615)(7684.000,1196.000)(7762.095,1079.671)(7810.904,1136.615)
\put(6319,1380){\makebox(0,0)[lb]{\smash{{\SetFigFont{9}{10.8}{\rmdefault}{\mddefault}{\updefault}$1$}}}}
\put(7377,1387){\makebox(0,0)[lb]{\smash{{\SetFigFont{9}{10.8}{\rmdefault}{\mddefault}{\updefault}$2$}}}}
\put(8486,1380){\makebox(0,0)[lb]{\smash{{\SetFigFont{9}{10.8}{\rmdefault}{\mddefault}{\updefault}$3$}}}}
\put(9589,1388){\makebox(0,0)[lb]{\smash{{\SetFigFont{9}{10.8}{\rmdefault}{\mddefault}{\updefault}$4$}}}}
\put(10706,1388){\makebox(0,0)[lb]{\smash{{\SetFigFont{9}{10.8}{\rmdefault}{\mddefault}{\updefault}$5$}}}}
\put(1835,1361){\makebox(0,0)[lb]{\smash{{\SetFigFont{9}{10.8}{\rmdefault}{\mddefault}{\updefault}$q_2$}}}}
\put(2300,1789){\makebox(0,0)[lb]{\smash{{\SetFigFont{9}{10.8}{\familydefault}{\mddefault}{\updefault}$a,b$}}}}
\put(1790,829){\makebox(0,0)[lb]{\smash{{\SetFigFont{9}{10.8}{\familydefault}{\mddefault}{\updefault}$a$}}}}
\put(596,2179){\makebox(0,0)[lb]{\smash{{\SetFigFont{9}{10.8}{\familydefault}{\mddefault}{\updefault}$b,d$}}}}
\put(1865,2156){\makebox(0,0)[lb]{\smash{{\SetFigFont{9}{10.8}{\familydefault}{\mddefault}{\updefault}$d$}}}}
\put(1295,1774){\makebox(0,0)[lb]{\smash{{\SetFigFont{9}{10.8}{\familydefault}{\mddefault}{\updefault}$a$}}}}
\put(3485,1602){\makebox(0,0)[lb]{\smash{{\SetFigFont{9}{10.8}{\familydefault}{\mddefault}{\updefault}$d$}}}}
\put(2420,1369){\makebox(0,0)[lb]{\smash{{\SetFigFont{9}{10.8}{\familydefault}{\mddefault}{\updefault}$b$}}}}
\put(6754,1827){\makebox(0,0)[lb]{\smash{{\SetFigFont{9}{10.8}{\familydefault}{\mddefault}{\updefault}$a$}}}}
\put(10077,1587){\makebox(0,0)[lb]{\smash{{\SetFigFont{9}{10.8}{\familydefault}{\mddefault}{\updefault}$d$}}}}
\put(7789,1835){\makebox(0,0)[lb]{\smash{{\SetFigFont{9}{10.8}{\familydefault}{\mddefault}{\updefault}$a,b$}}}}
\put(8878,1835){\makebox(0,0)[lb]{\smash{{\SetFigFont{9}{10.8}{\familydefault}{\mddefault}{\updefault}$a,b$}}}}
\put(7819,109){\makebox(0,0)[lb]{\smash{{\SetFigFont{9}{10.8}{\rmdefault}{\mddefault}{\updefault}$\cR_5(a,b,d)$}}}}
\put(1752,117){\makebox(0,0)[lb]{\smash{{\SetFigFont{9}{10.8}{\rmdefault}{\mddefault}{\updefault}$\cR_4(a,b,d)$}}}}
\put(7355,2208){\makebox(0,0)[lb]{\smash{{\SetFigFont{9}{10.8}{\familydefault}{\mddefault}{\updefault}$d$}}}}
\put(8436,2224){\makebox(0,0)[lb]{\smash{{\SetFigFont{9}{10.8}{\familydefault}{\mddefault}{\updefault}$d$}}}}
\put(6108,2216){\makebox(0,0)[lb]{\smash{{\SetFigFont{9}{10.8}{\familydefault}{\mddefault}{\updefault}$b,d$}}}}
\put(10399,2216){\makebox(0,0)[lb]{\smash{{\SetFigFont{9}{10.8}{\familydefault}{\mddefault}{\updefault}$a,b,d$}}}}
\put(3822,2201){\makebox(0,0)[lb]{\smash{{\SetFigFont{9}{10.8}{\familydefault}{\mddefault}{\updefault}$a,b,d$}}}}
\put(5103,1609){\makebox(0,0)[lb]{\smash{{\SetFigFont{9}{10.8}{\familydefault}{\mddefault}{\updefault}$\eps$}}}}
\put(7587,702){\makebox(0,0)[lb]{\smash{{\SetFigFont{9}{10.8}{\familydefault}{\mddefault}{\updefault}$a$}}}}
\put(8886,1001){\makebox(0,0)[lb]{\smash{{\SetFigFont{9}{10.8}{\familydefault}{\mddefault}{\updefault}$b$}}}}
\put(694,1369){\makebox(0,0)[lb]{\smash{{\SetFigFont{9}{10.8}{\rmdefault}{\mddefault}{\updefault}$q_1$}}}}
\put(4084,1382){\makebox(0,0)[lb]{\smash{{\SetFigFont{9}{10.8}{\rmdefault}{\mddefault}{\updefault}$q_4$}}}}
\thinlines
\put(4211.500,1870.929){\arc{394.717}{2.4948}{6.9299}}
\blacken\thicklines
\path(4371.033,1892.097)(4369.000,1752.000)(4442.998,1870.977)(4371.033,1892.097)
\end{picture}
}
\end{center}
\caption{Right-ideal witnesses for product.} 
\label{fig:product}
\end{figure}

\begin{theorem} [Product]
\label{thm:product}
For $m\ge 1$, $n\ge 2$, the complexity of the product $R_m(a,b,d)\cdot R_n(a,b,d)$ is $m+2^{n-2}$.
\end{theorem}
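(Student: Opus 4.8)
The plan is to determinize the $\eps$-NFA $\cP$ and count its reachable, pairwise-distinguishable states, showing there are exactly $m+2^{n-2}$. Since $\cR_m$ is deterministic and its final state $q_m$ is a sink (all of $a,b,d$ fix $q_m$), every reachable state of $\cP^{\deter}$ has the form $(q_i,S)$ with a single $\cR_m$-state $q_i$ and a subset $S\subseteq Q_n$; moreover $S=\emp$ until the computation first reaches $q_m$, after which $q_i=q_m$ permanently and the $\eps$-transition $q_m\to 1$ forces $1\in S$ at every later step. I would first record the resulting trichotomy: the $m-1$ pre-$q_m$ states $(q_i,\emp)$ with $1\le i\le m-1$; the post-$q_m$ states $(q_m,\{1\}\cup T)$ with $T\subseteq\{2,\dots,n-1\}$ and $n\notin S$; and the states with $n\in S$. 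Because $n$ is an accepting sink of $\cR_n$, every state with $n\in S$ accepts $\Sig^*$, so all of these collapse to a single accepting state, call it $Z$. This already yields the upper bound $(m-1)+2^{n-2}+1=m+2^{n-2}$, matching the bound of~\cite{BJL13}; it remains to prove all these states are reachable and pairwise distinguishable.

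For reachability, the pre-$q_m$ states are reached by $a^{i-1}$ from the initial state, and $(q_m,\{1\})$ by a further $d$ from $(q_{m-1},\emp)$. The crux is reaching every $(q_m,\{1\}\cup T)$. Here the forced presence of $1$ obstructs naive permutation reasoning, so I would exploit two clean actions: since $b$ and $d$ both fix state $1$, on sets containing $1$ they act as the honest transformations $(2,\dots,n-1)$ and $(n-1\to n)$ with no side effect from the $\eps$-loop. I would then check that the word $a^{n-2}b$, which induces the transposition $(1,2)$ by Remark~\ref{rem:transposition}, acts on $\{1\}\cup T$ as $T\mapsto T\cup\{2\}$: it forces $2$ into the set and leaves $\{3,\dots,n-1\}\cap T$ untouched. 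Equipped with this ``insert state $2$'' operation together with the cyclic rotation $b$ of $\{2,\dots,n-1\}$, a shift-register loading argument places tokens at arbitrary positions, so every $T$ is reachable. Finally $Z$ is reached by one $d$ once $n-1$ has been loaded, e.g.\ from $(q_m,\{1,\dots,n-1\})$.

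For distinguishability, $Z$ is the unique final state and is separated from everything by $\eps$. To separate two post-$q_m$ states $(q_m,\{1\}\cup T)$ and $(q_m,\{1\}\cup T')$ with $T\ne T'$, I would again use that $b,d$ act cleanly: choosing $p$ in the symmetric difference, say $p\in T\setminus T'$, and the rotation $b^s$ with $b^s(p)=n-1$, bijectivity of $b^s$ forces $n-1\in b^s(T)$ while $n-1\notin b^s(T')$, so $b^sd$ drives the first state to $Z$ (accept) and leaves the second non-final. To separate a pre-$q_m$ state $(q_i,\emp)$ from a post-$q_m$ state, note that a word whose only $d$ is a single final letter (such as $b^sd$, or $a^{n-2}d$ when $T=\emp$) takes the post-$q_m$ state to $Z$, while from $(q_i,\emp)$ the $\cR_n$-component stays empty until the lone $d$, which can add at most $\{1\}$ and never $n$; hence it is rejected there. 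The pre-$q_m$ states are separated from one another by $a^{m-1-i}d\,a^{n-2}d$: from $(q_i,\emp)$ the first block reaches $q_{m-1}$, the first $d$ enters $\cR_n$, and $a^{n-2}d$ then loads $n$, giving acceptance; from $(q_j,\emp)$ with $j\ne i$ the first block misses $q_{m-1}$, the first $d$ has no effect, and only a single $d$ ever acts inside $\cR_n$, so $n$ never enters $S$ and the word is rejected.

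The main obstacle is the reachability of all $2^{n-2}$ subsets: the $\eps$-loop forcing $1\in S$ breaks the clean symmetric-group action one would like to invoke, and the resolution is the observation that $b$ and $d$ fix state $1$ (so act honestly) while $a^{n-2}b$ realizes a pure ``insert state $2$'' operation, reducing the problem to loading an $(n-2)$-cell cyclic shift register. I would close by verifying the small parameters $m=1$ (the initial state is already $(q_1,\{1\})$ and there are no pre-$q_m$ states) and $n\in\{2,3\}$ (where $\{2,\dots,n-1\}$ is empty or a single point and the shift-register argument degenerates), confirming the count $m+2^{n-2}$ in every case.
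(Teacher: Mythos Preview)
Your overall architecture matches the paper's proof closely, and your upper-bound trichotomy and all of your distinguishability arguments are correct. The gap is in the reachability step, specifically in your ``insert state $2$'' gadget.

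You claim that $a^{n-2}b$, which induces the transposition $(1,2)$ on $Q_n$, acts on the $Q_n$-component of a state $\{q_m\}\cup\{1\}\cup T$ as $T\mapsto T\cup\{2\}$. This would be true if the subset construction applied the composite transformation $t_{a^{n-2}b}=(1,2)$ and then took one $\eps$-closure. But in the determinization of an $\eps$-NFA the $\eps$-closure is taken after \emph{each} letter, and since $q_m$ is always present, state $1$ is re-injected after every single $a$. Each subsequent $a$ then sends that fresh $1$ to $2$, so the set keeps growing. Concretely, for $n=5$ and $T=\{3\}$ one gets
\[
\{1,3\}\xrightarrow{a}\{1,2,4\}\xrightarrow{a}\{1,2,3\}\xrightarrow{a}\{1,2,3,4\}\xrightarrow{b}\{1,2,3,4\},
\]
so $a^{3}b$ sends $T=\{3\}$ to $\{2,3,4\}$, not to $\{2,3\}$. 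Thus your shift-register loading argument, as written, does not go through.

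The repair is easy and is exactly what the paper does: use a \emph{single} $a$ as the injector. One application of $a$ to $\{q_m,1\}\cup T$ yields $\{q_m,1,2\}\cup a(T)$, i.e.\ it inserts $2$ and simultaneously shifts the existing tokens by one along the cycle $(1,\dots,n-1)$; the clean rotation $b$ then positions them. With this corrected primitive your inductive ``load one token, rotate, repeat'' scheme works and coincides with the paper's induction (the paper reaches $\{q_m,1\}\cup\{i_0,\dots,i_k\}$ from a set of size one less by applying $ab^{\,i_0-1}$). Everything else in your proposal---the structural upper bound, the separation of $Z$, the $b^sd$ separator for post-$q_m$ states, and the word $a^{m-1-i}d\,a^{n-2}d$ for pre-$q_m$ states---is correct and essentially identical to the paper's argument.
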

\begin{proof}
It was shown in~\cite{BJL13} that $m+2^{n-2}$ is an upper bound on the complexity of the product of two right ideals. To prove this bound is met, 
we apply the subset construction to $\cP$ to obtain a DFA $\cD$ for $R_mR_n$.
The states of $\cD$ are subsets of $Q'_m\cup Q_n$.
We prove that all states of the form $\{q_i\}$, $i=1,\ldots,m-1$ and all states of the form
$\{q_{m},1\}\cup S$, where $S\subseteq Q_n\setminus\{1,n-1\}$, and state 
$\{q_{m},1,n\}$ are reachable, for a total of $m+2^{n-2}$ states.

State $\{q_1\}$ is the initial state, and  $\{q_i\}$ is reached by $a^{i-1}$ for $i=2,\ldots,m-1$.
Also, $\{q_{m},1\}$ is reached by $a^{m-2}d$. 
States $q_{m}$ and 1 are present in every subset reachable from now on. 
 By applying $ab^{j-1}$ to $\{q_{m},1\}$ we reach $\{q_{m},1,j\}$; hence all subsets 
 $\{q_{m},1\}\cup S$ with $|S|=1$ are reachable.
 Assume now that we can reach all sets $\{q_{m},1\}\cup S$ with $|S|=k$, and 
 suppose that we want to reach $\{q_{m},1\}\cup T$ with $T=\{i_0,i_1,\ldots,i_k\}$
 with $2\le i_0<i_1<\cdots <i_k\le n-1$.
 Start with $S=\{i_1-i_0, \ldots, i_k-i_0\}$ and apply $ab^{i_0-1}$.
 Finally, to reach $\{q_{m},1,n\}$, start with $\{q_{m},1,n-1\}$ and apply $d$.
 
If $1\le  i<j\le m-1$, then state $\{q_i\}$ is distinguishable from $\{q_j\}$ by 
$a^{m-1-j}da^{n-1}d$.
 Also, state $i\in Q_n$ with $2\le j\le n-1$ accepts $a^{n-1-j}d$ and no other state $j\in Q_n$ with $2\le j\le n-1$ accepts this word. 
 Hence, if $S,T \subseteq Q_n\setminus\{1,n-1\}$ and $S\ne T$, then 
 $\{q_{m},1\}\cup S$ and $\{q_{m},1\}\cup T$ are distinguishable.
 State $\{q_k\}$ with $2\le k\le m-1$ is distinguishable from state $\{q_{m},1\}\cup S$
 because there is a word with a single $d$ that is accepted from $\{q_{m},1\}\cup S$
 but no such word is accepted by $\{q_k\}$. Hence all the non-final states are distinguishable, and $\{q_{m},1,n\}$ is the only final state.
\qed
\end{proof}
\section{Conclusions}
We have shown that there is a stream of regular right ideal that acts as universal witness for all common operations.

\begin{thebibliography}{10}
\providecommand{\url}[1]{\texttt{#1}}
\providecommand{\urlprefix}{URL }

\bibitem{BBMR13}
Bell, J., Brzozowski, J., Moreira, N., Reis, R.: Symmetric groups and quotient
  complexity of boolean operations (2013), {\small\tt
  http://arxiv.org/abs/1310.1841}

\bibitem{Brz13}
Brzozowski, J.: In search of the most complex regular languages. Int. J. Found.
  Comput. Sci., {t}o appear.

\bibitem{Brz63}
Brzozowski, J.: Canonical regular expressions and minimal state graphs for
  definite events. In: Proceedings of the Symposium on Mathematical Theory of
  Automata. MRI Symposia Series, vol.~12, pp. 529--561. Polytechnic Press,
  Polytechnic Institute of Brooklyn, N.Y. (1963)

\bibitem{Brz10}
Brzozowski, J.: Quotient complexity of regular languages. J. Autom. Lang. Comb.
   15(1/2),  71--89 (2010)

\bibitem{BJL13}
Brzozowski, J., Jir{\'a}skov{\'a}, G., Li, B.: Quotient complexity of ideal
  languages. Theoret. Comput. Sci.  470,  36--52 (2013)

\bibitem{BrTa11}
Brzozowski, J., Tamm, H.: Theory of \'atomata. In: Mauri, G., Leporati, A.
  (eds.) Proceedings of the 15th International Conference on Developments in
  Language Theory $($DLT\/$)$. Lecture Notes in Computer Science, vol. 6795,
  pp. 105--117. Springer (2011)

\bibitem{BrTa12}
Brzozowski, J., Tamm, H.: Quotient complexities of atoms of regular languages.
  In: Yen, H.C., Ibarra, O.H. (eds.) DLT 2012. LNCS, vol. 7410, pp. 50--61.
  Springer (2012)

\bibitem{BrYe11}
Brzozowski, J., Ye, Y.: Syntactic complexity of ideal and closed languages. In:
  Mauri, G., Leporati, A. (eds.) DLT 2011. LNCS, vol. 6795, pp. 117--128.
  Springer Berlin / Heidelberg (2011)

\bibitem{Mir66}
Mirkin, B.G.: On dual automata. Kibernetika (Kiev)  2,  7--10 (Russian) (1970),
  english translation: Cybernetics {\bf 2}, (1966) 6--9

\bibitem{Myh57}
Myhill, J.: Finite automata and representation of events. Wright Air
  Development Center Technical Report  57--624 (1957)

\bibitem{Pic39}
Piccard, S.: Sur les bases du groupe sym\'etrique. \v{C}asopis pro
  p\v{e}stov\'an\'i matematiky a fysiky  68(1),  15--30 (1939)

\bibitem{Pin97}
Pin, J.E.: Syntactic semigroups. In: Handbook of Formal Languages, vol.~1:
  Word, Language, Grammar, pp. 679--746. Springer, New York, NY, USA (1997)

\bibitem{Yu01}
Yu, S.: State complexity of regular languages. J. Autom. Lang. Comb.  6,
  221--234 (2001)

\end{thebibliography}

\end{document}